\journal{Journal of Symbolic Computation}
\newtheorem{lemma}{Lemma}
\newtheorem{theorem}[lemma]{Theorem}
\newtheorem{corollary}[lemma]{Corollary}
\newtheorem{example}[lemma]{Example}
\newtheorem{definition}[lemma]{Definition}
\newtheorem{proposition}[lemma]{Proposition}
\newtheorem{remark}[lemma]{Remark}
\newlength\thickness
\tikzset{
  thickness/.store in = \thickness,
  thickness = 0.5pt
}
\let\set\mathbbm
\newcommand{\cfield}{\mathbbm{C}}
\newcommand{\ps}[2]{\cfield[[x_{#1},\dots,x_{#2}]]}
\newcommand{\pder}[2]{\frac{\partial}{\partial #2}#1}
\newcommand{\xqedhere}[2]{%
  \rlap{\hbox to#1{\hfil\llap{\ensuremath{#2}}}}}
\begin{document}

\begin{frontmatter}

\date{\today} \title{Formal Solutions of Completely Integrable Pfaffian Systems
  With Normal Crossings}

\author{Barkatou, Moulay A.}
\address{XLIM UMR 7252 , DMI, University of Limoges; CNRS\\
        123, Avenue Albert Thomas, 87060 Limoges, France\\
        \texttt{moulay.barkatou@unilim.fr}}

 \author{Jaroschek, Maximilian}
 \address{Max Planck Institute for Informatics,\\ Saarbruecken, Germany\\
  \texttt{maximilian@mjaroschek.com}}

\author{Maddah, Suzy S.}
\address{Fields Institute\\
  222 College St, Toronto, ON M5T 3J1 Ontario, Canada\\
  \texttt{suzy.maddah@inria.fr}}

\begin{abstract}
  In this paper, we present an algorithm for computing a fundamental matrix of
  formal solutions of completely integrable Pfaffian systems with normal
  crossings in several variables. This algorithm is a generalization of a method
  developed for the bivariate case based on a combination of several reduction
  techniques and is partially\footnote{Our Maple package PfaffInt can be
    downloaded at: \url{http://www.mjaroschek.com/pfaffian/} It contains
    functionalities illustrated by examples for the splitting, column reduction,
    rank reduction, and the computation of exponential parts of multivariate
    completely integrable systems with normal crossings}implemented in the
  computer algebra system \textsc{Maple}.
\end{abstract}

\begin{keyword}
  Linear systems of partial differential equations, Pfaffian systems, Formal
  solutions, Rank reduction, Hukuhara-Turrittin's normal form, Normal crossings.
\end{keyword}

\end{frontmatter}


\section{Introduction}
\label{sec:intro}
Pfaffian systems arise in many applications~\cite{key74}, including the studies
of aerospace, celestial mechanics~\cite{key31}, and
statistics~\cite{key3062}. So far, the most important systems for applications
are those with so-called normal crossings~\cite{key32}.

A univariate completely integrable Pfaffian system with normal crossings reduces
to a singular linear system of ordinary differential equations (ODS, in short),
which have been studied extensively (see~\cite{key6,key7} and references
therein). Moreover, unlike the general case of several variables considered
herein, algorithms to related problems leading to the computation of formal
solutions have been developed by various authors
(see~\cite{key24,key40,key25,key26} and references therein). The \textsc{Maple}
package \textsc{Isolde}~\cite{key27} and \textsc{Mathemagix} package
\textsc{Lindalg}~\cite{key427} are dedicated to the symbolic resolution of such
systems.

More recently, bivariate systems were treated by the first and third author of
this paper in~\cite{key101}. This paper refines the results of the bivariate
case and generalizes them to treat the more general multivariate case.

To get an intuition of the kind of systems we consider, we informally study the
following simple bivariate completely integrable Pfaffian system with normal
crossings. A formal definition of these systems will be given in
Section~\ref{sec:prelim}.

\begin{example}~\cite[Example 2]{key101}
  \label{ex:sim}
  Given the following bivariate system over the ring of formal power series in
  $(x_1,x_2)$ with complex coefficients:
  \begin{equation*}
    \begin{cases}
      x_1^{4} \pder{F}{x_1} = A_1 F = \left(\begin{matrix}
          x_1^3 + x_1^2+x_2 & x_2^2 \\
          -1 & x_1^3 + x_1^2 -x_2
        \end{matrix}\right) F, \\
      x_2^3 \pder{F}{x_2} = A_2 F= \left(\begin{matrix}
          x_2^2 -2 x_2 -6 & x_2^3 \\
          -2 x_2 & -3 x_2^2 -2 x_2 -6
        \end{matrix}\right) F,
    \end{cases}
  \end{equation*}
  we are interested in constructing the formal objects $F$ that satisfy the
  system. The existence of a fundamental matrix of solutions and its general
  form follows from well known theoretical results (see
  Corollary~\ref{gerardsol}). The proof, however, is not constructive. For
  simplicity, we assume we already know that a fundamental matrix of formal
  solutions in our particular case is of the form
  \begin{equation} \label{sim:sol1} \Phi(x_1, x_2) x_1^{C_1}x_2^{C_2}
    e^{q_1(x_1^{- 1/s_1})} e^{q_2(x_2^{- 1/s_2})}, \end{equation} where
  $\Phi(x_1, x_2)$ is a matrix with formal power series entries, $C_1$ and $C_2$
  are matrices with entries in $\set C$, and $q_1$,~$q_2$ are polynomials in
  $\set C[z_1], \set C[z_2]$ respectively. We now want to determine
  $ \Phi, C_1, C_2, q_1, q_2,s_1,s_2$.  For this purpose, we use the algorithm
  presented in~\cite{key101}. The main idea is to compute one part
  of the solution by considering an associated ODS in only one variable and then
  use this information to compute the other parts of the solution by
  transforming and decoupling the system into smaller and simpler systems:

  \begin{itemize}
  \item First, we construct two associated systems whose equations are derived
    by setting either $x_1=0$ or $x_2=0$:
    \[
    \begin{cases}
      x_1^{4} \pder{F}{x_1} = A_1 (x_1,0) \; F= \left(\begin{matrix}
          x_1^3+ x_1^2 & 0 \\
          -1 & x_1^3 + x_1^2
        \end{matrix}\right) F,\\
      x_2^3 \pder{F}{x_2} = A_2 (0, x_2) \; F= \left(\begin{matrix}
          x_2^2 -2 x_2 -6 & x_2^3 \\
          -2 x_2 & -3 x_2^2 -2 x_2 -6
        \end{matrix}\right)F.
    \end{cases}
    \]
    We show in Section~\ref{sec:invariants} that the formal invariants $q_1,
    q_2,s_1$ and $s_2$ can be computed from these associated systems. Via
    \textsc{Isolde} or \textsc{Lindalg} we compute $s_1=s_2=1$ and
    $q_1(1/x_1)=\frac{-1}{x_1}$, $q_2(1/x_2)=(\frac{3}{x_2^2} + \frac{2}{x_2} ),$
    and~\eqref{sim:sol1} becomes
    \[
    \Phi (x_1, x_2) x_1^{{C}_1} x_2^{C_2} e^{\frac{-1}{x_1}} e^{\frac{3}{{x_2}^2}
      + \frac{2}{x_2}}.
    \]
  \item Next, we apply the so-called eigenvalue shifting
    $F = e^{\frac{-1}{x_1}} e^{\frac{3}{x_2^2} + \frac{2}{x_2}} G$ (for a new
    unknown vector $G$), to facilitate the next step. The shifting
    yields:\goodbreak
    \[
    \begin{cases}
      x_1^{4} \pder{G}{x_1} =  \left(\begin{matrix}
          x_1^3+x_2 & x_2^2 \\
          -1 & x_1^3 -x_2
        \end{matrix}\right) G,  \\
      x_2^2 \pder{G}{x_2} =  \left(\begin{matrix}
          x_2  & x_2^2 \\
          -2 & -3 x_2
        \end{matrix}\right) G.
    \end{cases}
    \]
  \item After the eigenvalue-shifting we apply another transformation that
    reduces the orders of the singularities in $x_1$ and $x_2$ to their minimal
    integer values. By setting $G = T_1 H$ where
    \[
    T_1=\left(\begin{matrix}
        x_2 x_1^3 & -x_2 \\
        0 & 1
      \end{matrix}\right),
    \]
    we get:
    \begin{equation*}
      \begin{cases}
        x_1 \pder{H}{x_1} = \left(\begin{matrix}
            -2 & 0 \\
            -x_2 & 1
          \end{matrix}\right) H,  \\
        x_2 \pder{H}{x_2} = \left(\begin{matrix}
            -2 & 0\\
            -2 x_1^3 & -1
          \end{matrix}\right) H.
      \end{cases}
    \end{equation*}
  \item Finally, via some linear algebra (see~\cite[Chapter 3]{key73} for
    general cases) we compute the transformation
    \[
    T_2=\left(\begin{matrix}
        1 & 0 \\
        \frac{x_2}{3} + 2 x_1^3 &\hspace{0.2cm}-1
      \end{matrix}\right),
    \]
    and setting $H=T_2 U$ results in the system
\[
\begin{cases}
  x_1 \pder{U}{x_1} = C_1 U = \left(\begin{matrix} -2 & 0 \\ 0 &
      1 \end{matrix}\right) U ,\\
  x_2 \pder{U}{x_2} = C_2 U = \left(\begin{matrix} -2 & 0 \\ 0 &
      -1 \end{matrix}\right)U.
\end{cases}
\]

We can now read off $C_1$ and $C_2$. We collect the applied
transformations and get a fundamental matrix of solutions:
    \[
      \underbrace{T_1 T_2}_{=:\Phi} x_1^{C_1} x_2^{C_2}e^{\frac{-1}{x_1}}
      e^{\frac{3}{x_2^2} + \frac{2}{x_2}},
    \]
    where $C_1=\left(\begin{matrix} -2 & 0 \\ 0 & 1 \end{matrix}\right)$ and
    $C_2=\left(\begin{matrix} -2 & 0 \\ 0 & -1 \end{matrix}\right)$.
  \end{itemize}
\end{example}

Unlike this simple example, the steps of computation can be far more involved
and demand multiple levels of recursion. In order to generalize this algorithm
to more than two variables, the following nontrivial questions have to be
addressed:

\begin{itemize}
\item Can the information on the formal invariants still be obtained from the
  associated ODS systems?
\item Can a rank reduction algorithm be developed without relying on properties
  of principal ideal domains as in the univariate and bivariate case?
\end{itemize}
The results of~\cite{key101} which have an immediate generalization to the
multivariate setting are refined herein and supported by fully transparent
proofs and illustrating figures (Theorem~\ref{exponentialpfaff} which answers
positively and unconditionally the first question, the structure of the main
algorithm described in Section~\ref{sec:outline}, and
Theorem~\ref{moserpfaff}). However, the answer to the second question is more
elaborate (see Section~\ref{sec:rankred}) and requires the discussion of two
problems which are not discussed in~\cite{key101}:
\begin{itemize}
\item The major obstacle to a generalization of the results on the
  bivariate case lies in the process of finding integral relations among
  generators of certain modules over the ring of multivariate power series. In
  Section~\ref{colpfaff}, we propose a solution that relaxes the condition
  of working over a principal ideal domain to working over local rings and show
  how to utilize Nakayama’s Lemma in the formal reduction process if the modules
  under investigation are free.
\item We discuss an algorithmic difficulty which arises as not all formal power
  series under manipulation admit a finite representation, even if the input
  Pfaffian system is given in a finite form (see
  Example~\ref{exm:trunc}). Although this problem arises in the bivariate case
  as well, it has not been addressed before (neither in~\cite{key101} nor
  in~\cite{key5,key73}). We provide a reasoning to check the correctness of our
  algorithm.
\end{itemize} 
We thus present the first comprehensive description of the state of the art
algorithmic approach for solving completely integrable Pfaffian systems with
normal crossings in the multivariate setting. Our investigation also involves
the multivariate versions of the transformations used classically in the
well-studied univariate case (e.g. shearing transformations in
Section~\ref{shearpfaff}, column reduction transformations in
Section~\ref{colpfaff}, and properties of transformations in
Proposition~\ref{Tform}). Not only does this discussion serve the manipulation
of such transformations within our proposed formal reduction, but it also plays
a role in future generalizations of many other algorithms available for
univariate systems (e.g. the alternative rank reduction algorithm of
Section~\ref{sec:alt} and the notion of \emph{simple systems} as suggested in
the conclusion).

This paper is divided as follows: In Section~\ref{sec:prelim}, we recall the
basic definitions and the necessary theory for our algorithm. This includes the
general form of the solutions, the notion of equivalence between systems, the
classification of singularities, and a description of the necessary
transformations whose generalization to the multivariate case is
straightforward. In Section~\ref{sec:outline}, we give the general structure of
our proposed algorithm which relies on two major components: The first is
associating to our system a set of ODS's from which its formal invariants can be
efficiently derived. This is detailed in Section~\ref{sec:invariants}. The
second component is the rank reduction which we give in
Section~\ref{sec:rankred}. The main algorithm is then given in
Section~\ref{mainpfaff} before concluding in Section~\ref{conpfaff}.

\section{Preliminaries}
\label{sec:prelim}
\subsection{Completely Integrable Pfaffian Systems with Normal Crossings}
\label{sec:prelimpfaff}
The systems considered in this paper are those whose associated differential
form is a 1-form. More explicitly, let ${\rm R}:=\ps{1}{n}$ be the ring of
formal power series in $x_1, x_2,\dots, x_n$ over the field of complex numbers
$\set C$. A Pfaffian system with normal crossings is a system of linear partial
differential equations of the form
\begin{equation}
  \label{eq:sys}
x_i^{p_i+1} \pder{F}{x_i} = A_iF, \quad 1 \leq  i \leq n,
\end{equation}
where the $A_i$'s are $d \times d$ matrices with entries in $\rm R$.  The
system~\eqref{eq:sys} is completely determined by the $A_i$'s and $p_i$'s and we
conveniently denote it by $[A]$. Each of the $p_i$'s is an integer and the
number $p’_i:=\max(0,p_i)$ is called the Poincar\'e rank of the $i^{th}$
component $A_i$.  The $n$-tuple
$$ p := (p'_1, \dots, p'_n)$$ is called the Poincar\'e rank of the system $[A]$.
If $p_i \leq -1$ for every $i \in \{1, \dots, n\}$ then the origin is an
ordinary (non-singular) point of the system and the system is said to be
regular. In this paper, we tackle the rather more interesting singular
systems. The singular locus of a system with normal crossings is a union of
hyperplanes of coordinates $x_1 x_2 \dots x_n =0$. A Pfaffian system is called
completely integrable, if the following commutation rule holds for all
$i,j\in\{1,\dots,n\}$:
\begin{equation}
  \label{eq:cond}
  A_iA_j - A_jA_i = x_i^{p_i+1}\pder{A_j}{x_i} - x_j^{p_j+1}\pder{A_i}{x_j}.
\end{equation}
Subsequently, whenever we refer to a Pfaffian system, we assume it is a
completely integrable system with normal crossings. For the remainder of this
paper we once and for all fix a Pfaffian system $[A]$ for which all the $A_i$
are non-zero and there is at least one strictly positive $p_i$. All subsequent
definitions and theorems are stated in this setting, disregarding systems for
which the origin is an ordinary point.

\subsection{Notations and Algebraic Structures}
Our notations follow a set of guidelines in order to help the reader remember
the multitude of different objects involved in our work. Single letter
identifiers are usually chosen to be the initial letter of the mathematical term
attached to the referenced object, like $d$ for dimension and~$\rm R$ for a
ring. For a vector $v$, its $i^{th}$ component is given by $v_i$ and for a
univariate power series $s$ the $i^{th}$ coefficient is denoted by $s_i$. We do
not distinguish between row and column vectors. Upper case letters are used for
algebraic structures, matrices and the unknown in a Pfaffian system. A family of
matrices is given with lower indices, e.g.\ $(M_{i,j})_{i,j\geq 0}$, and for a
matrix $M_{i,j}$, blocks are given with upper indices, e.g.
\[
M_{i,j}=\left(\begin{matrix}
    M_{i,j}^{11} & M_{i,j}^{12}\\
    M_{i,j}^{21} & M_{i,j}^{22}
  \end{matrix}\right),
\]
where the size of the different blocks are clear from the context.
By $x$ we denote the collection of variables $x_1,\dots,x_n$ and we use
$\bar{x}_i$ to refer to the variables
$x_1,\dots,x_{i-1},x_{i+1},\dots,x_n$.

One can expand the $A_i$ in system~$[A]$ as a formal power series with respect
to $x_i$:
\begin{equation*}
  x_i^{p_i +1 }\pder{F}{x_i} = (A_{i,0}+ A_{i,1} x_i +
  A_{i,2} x_i^2 + \dots ) F,
\end{equation*}
where the $A_{i,j}$ are elements of $\set C[[\bar{x}_i]]$. We denote this ring
by ${\rm R}_{\bar{x}_i}$.  The first coefficient $A_{i,0} = A(x_i=0)$ in such an
expansion can be regarded as non-zero without any loss of generality, otherwise
$p_i$ can be readjusted. We call $A_{i,0}$ the leading matrix coefficient of the
$i^{th}$ component.

Aside from the rings $\rm R$ and ${\rm R}_{\bar{x}_i}$, we will frequently have
to work in other algebraic structures. We denote by
${\rm K}:=\operatorname{Frac}({\rm R})$ (respectively ${\rm K}_{\bar{x}_i}$) the
fraction field of $\rm R$ (respectively ${\rm R}_{\bar{x}_i}$).  Let $L$ be the set of
monomials given by,
$$ L = \{ x^{\beta} = x_1^{\beta_1} x_2^{\beta_2} \dots x_n^{\beta_n}\text{ for
} \beta = (\beta_1, \beta_2, \dots, \beta_n) \in \mathbb{N}^n \}. $$
Clearly, $L$ is closed under multiplication and contains the unit element. Then,
one can define ${\rm R}_{L} := L^{-1} \rm R$, the localization of $\rm R$ at
$L$, i.e.\ the ring of series with only finitely many terms having monomials of
strictly negative exponents. Unlike in the univariate case, $\rm K$ and
${\rm R}_L$ do not refer to the same algebraic structure, e.g.\ $(x_1+x_2)^{-1}$
is an element of $\rm K$ which is not an element of ${\rm R}_L$. In fact, there
exists no $\beta \in \mathbb{N}^2$ such that
$x^{\beta} (x_1+x_2)^{-1} \in \rm R$:
$(x_1+x_2)^{-1} = \frac{1/x_1}{1-(-x_2/x_1)} $ whose formal expansion with
respect to $x_2$ is $\sum_{i\geq 0}(-1)^ix_1^{-i-1}x_2^i$, which has infinitely
many poles in $x_1$. For a further characterization of $\rm K$, one may refer
to~\cite{key3908}.

Finally, in the sequel it will be necessary to introduce ramifications of the
form $x_i=t_i^{\alpha_i}$ for new variables $t_i$ and positive integers
$\alpha_i$. We will therefore write ${\rm R}_t$ for $\set C[[t_1,\dots,t_n]]$
and allow analogous notations for all structures introduced so far. The identity
and zero matrices of dimension $d$ are denoted by $I_d$ and $O_d$, and we set: $\rm{GL}_d (R) = \{ M \in\operatorname{Mat}_{d\times d}({\rm R})\; | \; \operatorname{det} (M) \; \text{is invertible in} \; \rm R \}.$
 
\subsection{Equivalent Systems}
As we have seen in the introductory example, we will make use of transformations
to bring a system into particular forms. Such a transformation acts on a
Pfaffian system as follows: A linear transformation (also called gauge
transformation) $F=T G$, where
$T \in GL_d({\rm K})$, applied to~\eqref{eq:sys} results in the system 
\begin{equation}
  \label{eq:equiv}
x_i^{\tilde{p}_i+1} \pder{G}{x_i} = \tilde{A}_i G, \quad 1 \leq i \leq n,
\end{equation}
where
\begin{equation}
  \label{eq:gauge}
  \frac{\tilde{A}_i }{x^{\tilde{p}_i +1}_i} = T^{-1}  ( \frac{A_i}{
    x^{p_i+1}_i} T - \pder{T}{x_i}), \quad 1 \leq i \leq n.
\end{equation}
We say that system~\eqref{eq:equiv} is \textit{equivalent} to
system~\eqref{eq:sys} and we write $T[A]:=[\tilde{A}]$. It can be easily
verified that complete integrability is inherited by an equivalent
system. Subsequently, to stay in the same class of systems under study, special
care will be taken so that the transformations used in our considerations do not
alter the normal crossings. In fact, a major difficulty within the symbolic
manipulation of system~\eqref{eq:sys} arises from~\eqref{eq:gauge}. It is
evident that any transformation alters all the components simultaneously. In
particular, the equivalent system does not necessarily inherit the normal
crossings even for very simple examples.
\begin{example}\cite[Section 4]{key5}
\label{exmnaive}
Consider the following completely integrable Pfaffian system with normal
crossings of Poincar\'e rank $(3,1)$:
\begin{equation*}
\begin{cases}
  x_1^4 \pder{F}{x_1} = A_{1}(x_1, x_2)\; F = \left(\begin{matrix} x_1^3 +x_2 &
      x_2^2 \\ -1 & -x_2 + x_1^3 \end{matrix}\right)  F,\\
  x_2^2 \pder{F}{x_2} = A_{2}(x_1, x_2)\; F = \left(\begin{matrix} x_2 & x_2^2
      \\ -2 & -3x_2\end{matrix}\right) F.
\end{cases}
\end{equation*}
This system appears within the reduction of the system of Example~\ref{ex:sim}
in the introduction. As we have seen, there exists a transformation which
drops $p_1$ to zero. This can also be attained by the transformation
\[F = \left(\begin{matrix} x_1^3 & -x_2^2 \\ 0 & x_2 \end{matrix}\right) G,\]
which is computed by the univariate-case Moser-based rank reduction algorithm,
upon regarding the first component as an ODS in $x_1$ and $x_2$ as a
transcendental constant. This results in the equivalent system
\[
  \begin{cases}
    \label{gaugePfaffian}
    x_1x_2 \pder{G}{x_1} = \tilde{A}_{1} (x_1,x_2) \; G= \left(\begin{matrix} -2
        x_2 & 0 \\ -1 & x_2 \end{matrix}\right)  G,\\
    x_2^3 \pder{G}{x_2}= \tilde{A}_{2}(x_1,x_2) \; G = \left(\begin{matrix}
        -x_2^2 & 0 \\ -2 x_1^3 & -2 x_2^2 \end{matrix}\right) G.
\end{cases}
\]
We can see that such a transformation achieves the goal of reducing the
Poincar\'e rank of the first component. However, it alters the normal crossings
as it introduces the factor~$x_2$ on the left hand side of the first
component. Moreover, it elevates the Poincar\'e rank of the second component.
\end{example}
In order to preserve the normal crossings, we restrict the class of
transformations that we use in  our algorithm:
 \begin{definition}
\label{wcompatible}
Let $T\in GL_d({\rm K})$. We say that the transformation $F= TG$ (respectively~$T$) is \textit{weakly compatible} with system $[A]$ if $T[A] :=\tilde{A}$ is
again a completely integrable Pfaffian system with normal crossings. In
particular, $\tilde{A}_i \in {\rm R}^{d \times d}$ for every
$i \in \{1, \dots, n\}.$
\end{definition}
Clearly, any constant or unimodular invertible matrix is an example of such
transformations. 

In the sequel, we will also need to resort to transformations with stronger
properties:
 \begin{definition}
\label{compatible}
Let $T\in GL_d({\rm K})$. We say that the transformation $F= TG$
(respectively~$T$) is \textit{compatible} with system $[A]$ if it is weakly
compatible with $[A]$ and the Poincar\'{e} rank of each individual component of
$T[A]$ does not exceed that of the respective component of $[A]$.
\end{definition}
\subsection{Fundamental Matrix of Formal Solutions}
Before studying how to construct formal solutions to a given system, the
question arises if and how many solutions exist. The language of stable modules
over the ring of power series is used in~\cite[Theorem 1]{key53} and~\cite[Main
Theorem]{key4} independently to establish the following theorem which gives an
answer to this question.

\begin{theorem}
  \label{gerardexistence}
  There exist strictly positive integers~$\alpha_i$, $1 \leq i \leq n$, and an
  invertible matrix $T \in {\rm R}^{d\times d}_t$ such that, upon setting
  $x_i = t_i^{\alpha_i}$, the transformation $T(t)$ yields the following
  equivalent system:
  \[
    t_i^{\alpha_i \tilde{p}_i+1} \pder{G}{t_i}= \tilde{A}_{i}(t_i) G, \quad 1 \leq
  i \leq n,\]
 where
 \[ \tilde{A}_{i}(t_i) = \operatorname{Diag} (\tilde{A}^{11}_{i}(t_i),
   \tilde{A}^{22}_{i}(t_i), \dots, \tilde{A}^{jj}_{i}(t_i)) ,\]
 and for every $\ell \in \{ 1, \dots, j\}$ we have that
 $\tilde{A}^{\ell \ell}_i(t_i) $ is a square matrix of dimension $d_\ell$ of the
 form
  \[ \tilde{A}^{ \ell \ell}_{i}(t_i) = w^{ \ell \ell}_{i} (t_i) I_{d_\ell} +
    t_i^{\alpha_i \tilde{p}_i}( c^{ \ell \ell}_{i}  I_{d_\ell} + N^{ \ell \ell}_{i}),\]
where
\begin{itemize}
\item $d_1 + d_2 + \dots + d_j = d$;
\item
  $w^{ \ell \ell}_{i} (t_i) = \sum_{m=0}^{\alpha_i \tilde{p}_i -1} \lambda_{im \ell}
  t_i^{m}$ is a polynomial in $t_i$, with coefficients in
  $\mathbb{C}$;
\item $c^{ \ell \ell}_{i} \in \mathbb{C}$ and $N^{ \ell \ell}_{i}$ is a constant
  (with respect to all derivations $\partial/\partial t_i$) $d_\ell$-square
  matrix having nilpotent upper triangular form;
\item for any fixed $\ell \in \{ 1, \dots, j \}$, the matrices
  ${\{ N^{ \ell \ell}_{i}\}}_{i = 1, \dots, n}$ are permutables;
\item for all $\ell \in \{ 1, \dots, j-1 \}$, there exists
  $i \in \{ 1, \dots, n\}$ such that
 \[ w^{ \ell \ell}_{i} (t_i) \neq w^{(\ell+1)(\ell+1)}_{i}
    (t_i) \text{\qquad or\qquad}  c^{ \ell \ell}_{i} - c^{(\ell+1)(\ell+1)}_{i} \not
    \in \mathbb{Z}.\]
\end{itemize}
\end{theorem}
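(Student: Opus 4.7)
The plan is to follow the stable-module approach of~\cite{key53,key4}, which adapts the classical Hukuhara--Turrittin theorem from ordinary differential systems to completely integrable Pfaffian systems with normal crossings. The backbone of the argument is induction on the dimension $d$, driven by two complementary reduction steps---splitting and shifting/shearing---whose compatibility across directions is forced by the integrability condition~\eqref{eq:cond}.

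First I would dispatch the base case $d=1$ directly: a scalar equation $x_i^{p_i+1}\,\partial F/\partial x_i = a_i F$ can be integrated once a ramification absorbs possibly fractional orders, producing a single block of the stated form with trivial nilpotent. For the inductive step, the key tool is a \textbf{splitting lemma}: if the residue matrix $A_{i,0}(0)\in \operatorname{Mat}_{d\times d}(\mathbb{C})$ has two or more distinct eigenvalues for at least one index $i$, then Hensel lifting produces $T\in\operatorname{GL}_d({\rm R})$ which block-diagonalizes $A_i$. Here complete integrability is essential: substituting the partially diagonalized system into~\eqref{eq:cond} and comparing leading coefficients shows that the \emph{same} $T$ simultaneously block-diagonalizes every other $A_j$. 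Hence the problem reduces to blocks of strictly smaller dimension and the inductive hypothesis applies on each summand.

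If no direction admits such a splitting---i.e.\ every residue matrix has a single eigenvalue---I would instead perform, in each direction, a succession of eigenvalue shifts and shearing transformations (possibly preceded by a ramification $x_i = t_i^{\alpha_i}$ to make the shift exponents integral). The shift in direction $i$ is a scalar gauge $F = \exp\bigl(\int w_i^{\ell\ell}(t_i)\,t_i^{-\alpha_i\tilde p_i-1}\,dt_i\bigr)\,G$ that extracts the polynomial tail $w_i^{\ell\ell}(t_i)$ from the leading matrix, and the subsequent shearings lower the residual Poincar\'e rank to the stated value. Each iteration strictly decreases either a block dimension or a Poincar\'e rank, so the process terminates with a block-diagonal form whose diagonal blocks have precisely the shape $w_i^{\ell\ell}(t_i)I_{d_\ell} + t_i^{\alpha_i\tilde p_i}(c_i^{\ell\ell}I_{d_\ell} + N_i^{\ell\ell})$.

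The main obstacle is the \emph{simultaneous} character of the reduction: every transformation acts on all $n$ components at once through~\eqref{eq:gauge}, so progress in direction $i$ could in principle undo the structure already achieved in direction $j$. This is exactly where complete integrability saves us. Substituting the normalized form into~\eqref{eq:cond} and equating coefficients of the relevant powers yields the two pieces of information one needs: (i) the invariants $w_i^{\ell\ell},\, c_i^{\ell\ell},\, N_i^{\ell\ell}$ extracted in direction $i$ are constant with respect to every $\partial/\partial t_j$ with $j\neq i$, and (ii) $[N_i^{\ell\ell},N_j^{\ell\ell}]=0$, which is the stated permutability of the nilpotent parts. Finally, the uniqueness clause---that adjacent blocks differ either in $w_i^{\ell\ell}$ or in $c_i^{\ell\ell}-c_i^{(\ell+1)(\ell+1)}\notin\mathbb{Z}$---is enforced by merging any two blocks that would violate both conditions, leaving the normal form uniquely determined up to reordering.
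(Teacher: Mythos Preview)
The paper does not supply its own proof of this theorem: it is quoted as a known result, attributed to Charri\`ere--G\'erard~\cite{key53} and van den Essen--Levelt~\cite{key4}, and used only as the theoretical foundation guaranteeing that a fundamental matrix of the form~\eqref{eq:sol} exists. Those references organise the argument through the language of \emph{stable modules} (free ${\rm R}$-lattices invariant under the connection), which is a different framework from the step-by-step gauge reduction you outline; your sketch is in fact closer in spirit to the \emph{constructive} reduction that the present paper develops in Sections~\ref{sec:outline}--\ref{sec:rankred} as its own contribution.

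At that level your outline is broadly on target, but there is a genuine soft spot in the nilpotent case. Once every leading matrix has a single eigenvalue and you have shifted it to zero, you assert that ``shearings lower the residual Poincar\'e rank'' and that the process terminates because ``each iteration strictly decreases either a block dimension or a Poincar\'e rank.'' Neither claim is automatic: a shearing applied to a system with nilpotent leading coefficient need not decrease the Poincar\'e rank, and a ramification $x_i=t_i^{\alpha_i}$ typically \emph{raises} it. The actual mechanism---already in Turrittin's univariate argument and carried over to the Pfaffian setting in~\cite{key53,key4}---is that the correct ramification index is dictated by a Newton-polygon/Katz-invariant computation, and only after ramifying \emph{and} performing a suitable rank reduction does the new leading matrix acquire at least two distinct eigenvalues, re-enabling the splitting lemma. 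Without this ingredient your termination measure is not well-founded, so the inductive loop as you describe it does not close.
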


This theorem guarantees the existence of a transformation which takes
system~\eqref{eq:sys} to the so-called Hukuhara-Turrittin's normal form from
which the construction of a fundamental matrix of formal
solutions~\eqref{eq:sol} is straightforward.  In fact, we have:
\begin{corollary}
  \label{gerardsol}
  Given system~\eqref{eq:sys}, a fundamental matrix of formal solutions exists
  and is of the form
  \begin{equation}
    \label{eq:sol}
    \Phi(x_1^{1/s_1}, \dots, x_n^{1/s_n} ) \prod_{i=1}^{n} x_{i}^{{C}_i}
    \prod_{i=1}^{n} \operatorname{exp}(Q_i(x_i^{-1/s_i})),
  \end{equation}
  where $\Phi$ is an invertible matrix with entries in ${\rm R}_t$ and for each
  $i \in \{1, \dots, n\}$ we have:
  \begin{itemize}
  \item $s_i$ is a positive integer;
  \item the diagonal matrix
    \[
    Q_i(x_i^{-1/s_i}) = \operatorname{Diag}\left(q_{i,1}(x_i^{-1/s_i}),
      q_{i,2}(x_i^{-1/s_i}), \dots, q_{i,d}(x_i^{-1/s_i})\right)
    \]
    contains polynomials in $x_i^{-1/s_i}$ over $\set C$
    without constant terms. We refer to $Q_i(x_i^{-1/s_i})$ as the
    $x_i$-exponential part. Under the notations of
    Theorem~\ref{gerardexistence}, it is obtained by formally integrating
    $\frac{w_{i}^{\ell\ell}}{t_i^{\alpha_i \tilde{p}_i +1}}$;
  \item $C_i $ is a constant matrix which commutes with
    $Q_i(x_i^{-1/s_i})$.
  \end{itemize}
\end{corollary}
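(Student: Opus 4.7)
The plan is to derive the corollary as a direct construction from the normal form produced by Theorem~\ref{gerardexistence}. First, I would apply Theorem~\ref{gerardexistence} to obtain positive integers $\alpha_i$ and an invertible $T\in{\rm R}_t^{d\times d}$ so that, after the ramification $x_i = t_i^{\alpha_i}$, the gauge transformation by $T$ produces a system whose components $\tilde A_i(t_i)$ are simultaneously block-diagonalized with blocks of the stated form. I would then set $s_i := \alpha_i$ and show that it suffices to exhibit a fundamental matrix of formal solutions in the new variables for each diagonal block separately; the full fundamental matrix is the block-diagonal assembly pre-multiplied by $T$, which upon reverting $t_i=x_i^{1/s_i}$ becomes the matrix $\Phi(x_1^{1/s_1},\dots,x_n^{1/s_n})$ of the statement.

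Next, I would solve a single block of dimension $d_\ell$. Because
\[
\tilde A^{\ell\ell}_i(t_i) = w^{\ell\ell}_i(t_i)I_{d_\ell} + t_i^{\alpha_i\tilde p_i}\bigl(c^{\ell\ell}_i I_{d_\ell}+N^{\ell\ell}_i\bigr),
\]
the $i$-th component of the block system splits additively into a scalar part $w^{\ell\ell}_i(t_i)I_{d_\ell}$ and a regular-singular part with constant coefficient matrix $c^{\ell\ell}_i I_{d_\ell}+N^{\ell\ell}_i$. Formally integrating $w^{\ell\ell}_i(t_i)/t_i^{\alpha_i \tilde p_i+1}$ produces a Laurent polynomial in $t_i^{-1}$ with no constant term whose exponential $\exp(q^{\ell\ell}_{i}(t_i^{-1}))$ satisfies the scalar part, while $t_i^{c^{\ell\ell}_i I_{d_\ell}+N^{\ell\ell}_i}$ (interpreted via its convergent matrix-exponential expansion using the nilpotency of $N^{\ell\ell}_i$) satisfies the regular-singular part. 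Multiplying gives a fundamental matrix of the $i$-th equation in the block.

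I would then combine these factors over $i = 1,\dots,n$. Here the permutability of $\{N^{\ell\ell}_i\}_{i=1}^n$ asserted by Theorem~\ref{gerardexistence} is essential: it guarantees that the matrices $c^{\ell\ell}_i I_{d_\ell}+N^{\ell\ell}_i$ mutually commute, so the product $\prod_i t_i^{c^{\ell\ell}_i I_{d_\ell}+N^{\ell\ell}_i}$ is well-defined and simultaneously solves each of the $n$ scalar ODSs of the block. Commutativity also ensures each factor $t_i^{c^{\ell\ell}_i I_{d_\ell}+N^{\ell\ell}_i}$ commutes with every exponential factor $\exp(q^{\ell\ell}_{j}(t_j^{-1}))$, so the product solves the $n$ component equations simultaneously; a short verification using complete integrability of the block (inherited from that of $[A]$) shows these separate solutions are mutually consistent. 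Collecting the block data as $C_i := \operatorname{Diag}(c^{\ell\ell}_i I_{d_\ell}+N^{\ell\ell}_i)_\ell$ and $Q_i := \operatorname{Diag}(q^{\ell\ell}_i)_\ell$ yields a fundamental matrix for the block-diagonal system of the form $\prod_i t_i^{C_i}\prod_i \exp(Q_i(t_i^{-1}))$, with $C_i$ commuting with $Q_i$ by construction.

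Finally, I would pre-multiply by $T(t_1,\dots,t_n)$ and substitute $t_i = x_i^{1/s_i}$, arriving at the desired expression~\eqref{eq:sol}. The main conceptual obstacle is the last paragraph: verifying that the separately obtained one-variable fundamental matrices truly combine into a single multivariate fundamental matrix. This relies crucially on the simultaneous block-diagonalization, the permutability of the $N^{\ell\ell}_i$, and the complete integrability of the reduced system, so the verification reduces to noting that commuting constant-coefficient exponentials automatically satisfy the cross-derivative compatibility conditions~\eqref{eq:cond}.
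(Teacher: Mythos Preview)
Your proposal is correct and follows precisely the approach the paper intends: the paper presents Corollary~\ref{gerardsol} as a straightforward consequence of Theorem~\ref{gerardexistence} without supplying an explicit proof, merely noting that from the Hukuhara--Turrittin normal form the construction of~\eqref{eq:sol} is immediate and that the $Q_i$ arise by formally integrating $w^{\ell\ell}_i/t_i^{\alpha_i\tilde p_i+1}$. Your write-up simply spells out those details---solving each block via its scalar and regular-singular parts, using the permutability of the $N^{\ell\ell}_i$ to combine factors across $i$, and pre-multiplying by $T$---which is exactly the omitted ``straightforward'' argument.
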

A singular system $[A]$ is said to be \textit{regular singular} whenever, for every
$i \in \{1, \dots , d \}$, $Q_i(x_i^{-1/s_i})$ is a zero matrix.  Otherwise,
system~\eqref{eq:sys} is said to be \textit{irregular singular} and the entries
of $Q_i(x_i^{-1/s_i})$, $1\leq i \leq n$, determine the main asymptotic behavior
of the actual solutions as $x_i \rightarrow 0$ in appropriately small sectorial
regions~\cite[Proposition 5.2, pp 232, and Section~4]{key1}.
\begin{definition}
\label{katzpfaff}
Let $i \in \{1, \dots, n\}$. If $Q_i(x_i^{-1/s_i})$ is a nonzero matrix then we set $m_{i,j}$ to be the minimum order in $x_i$ within the terms
of
$q_{i,j}(x_i^{-1/s_i})$ for $1\leq j\leq d$. The $x_i$-\textit{formal exponential growth order}
($x_i$-exponential order, in short) of~$A_i$ is the rational number
$$\omega(A_i) = - \operatorname{min}_{1 \leq j \leq d}  m_{i,j}.$$
The $n$-tuple of rational numbers
$\omega(A) = (\omega(A_1), \dots, \omega(A_n))$ then defines the exponential
order of system~$[A]$. Otherwise, we set $\omega(A_i) =0$.
\end{definition}
If two systems are equivalent then they have the same $x_i$-exponential parts,
and consequently the same $x_i$ exponential orders, for all $1 \leq i \leq n$,
under any transformation $T \in GL_d({\rm K})$.
\begin{example}[Example~\ref{ex:sim} cont.]
  From our investigations in the example of Section~\ref{sec:intro}, we see that
  for the given fundamental system of formal solutions, we have non-zero exponential
  parts with $\omega(A_1)=1$ and $\omega(A_2)=2$ and so the system is irregular
  singular (although $s_1 = s_2 =1$).
\end{example}
The above theoretical results on existence do not establish the formal reduction
itself, that is the algorithmic procedure which computes explicitly the
$\alpha_i$'s and a transformation which takes the system to a normal form that
allows the construction of such solutions. This will be our interest in the
following sections.

The computation of the formal invariants is a difficult task in the univariate
case~\cite{key24,B10}. However, we will prove in Section~\ref{sec:invariants} that
in the multivariate case, these invariants can be computed from associated
univariate systems. Unlike the univariate case, the main difficulties of the
algorithm lie in rank reduction. Before proceeding to describe the algorithms we
propose, we give a property of the transformations which can be deployed:
\begin{proposition}
\label{Tform}
Consider a completely integrable Pfaffian system $[A]$ with normal
crossings. Let $T \in GL_d({\rm K})$ and set $T[A] = [\tilde{A}]$. If $T$ is a
transformation which is weakly compatible with system $[A]$ then
$T \in GL_d(R_L)$.
\end{proposition}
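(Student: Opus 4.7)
The plan is to work prime-by-prime in the UFD ${\rm R}=\mathbb{C}[[x_1,\dots,x_n]]$. Since an element of ${\rm K}$ lies in ${\rm R}_L$ exactly when its reduced denominator is, up to a unit, a product of the variables, it suffices to show that for every irreducible $\pi \in {\rm R}$ not associate to any $x_j$, every entry of $T$ has non-negative $\pi$-adic valuation. Once this is established for $T$, the same argument applied to $T^{-1}$ --- which is weakly compatible with the equivalent system $[\tilde{A}]$ and produces $[A]$ --- yields $T^{-1} \in {\rm R}_L^{d\times d}$, and hence $T \in GL_d({\rm R}_L)$.

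Fix such a $\pi$ and localize at the height-one prime $(\pi)$: every $x_j$ becomes a unit, $\partial/\partial x_i$ extends to the discrete valuation ring ${\rm R}_{(\pi)}$ via the quotient rule, and the rearranged gauge equation
\[
\pder{T}{x_i} \;=\; \frac{A_i}{x_i^{p_i+1}}\, T \;-\; T\, \frac{\tilde{A}_i}{x_i^{\tilde{p}_i+1}}
\]
has coefficients in ${\rm R}_{(\pi)}^{d\times d}$ (here weak compatibility enters, giving $\tilde{A}_i \in {\rm R}^{d\times d}$). Suppose for contradiction that some entry of $T$ has negative $\pi$-adic valuation; let $m \geq 1$ be the maximal pole order and set $\tilde{T} := \pi^m T$, so that $\tilde{T} \in {\rm R}_{(\pi)}^{d\times d}$ with at least one entry of $\pi$-valuation zero. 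Substituting $T = \pi^{-m}\tilde{T}$ into the displayed equation and multiplying by $\pi^m$ gives
\[
\pder{\tilde{T}}{x_i} - \frac{A_i}{x_i^{p_i+1}}\,\tilde{T} + \tilde{T}\,\frac{\tilde{A}_i}{x_i^{\tilde{p}_i+1}} \;=\; \frac{m}{\pi}\,\frac{\partial \pi}{\partial x_i}\,\tilde{T}.
\]
The left-hand side lies in ${\rm R}_{(\pi)}^{d\times d}$, hence so does the right-hand side; comparing $\pi$-valuations at a position where $\tilde{T}$ is a $\pi$-adic unit then forces $\pi \mid \partial\pi/\partial x_i$ in ${\rm R}$, for every $i\in\{1,\dots,n\}$.

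The final step is to rule out this simultaneous divisibility. Decompose $\pi = \pi_k + \pi_{k+1} + \cdots$ into homogeneous components with $\pi_k \neq 0$ and $k := \operatorname{ord}(\pi) \geq 1$. For each $i$, the divisibility $\pi \mid \partial\pi/\partial x_i$ forces $(\pi_k)_{x_i} = 0$: if $\partial\pi/\partial x_i = 0$ this is immediate, and otherwise $\partial\pi/\partial x_i = \pi h_i$ has order at least $k$ while $(\pi_k)_{x_i}$, if nonzero, is homogeneous of degree $k-1$. Euler's identity $k\pi_k = \sum_i x_i (\pi_k)_{x_i}$ then gives $\pi_k = 0$, contradicting $\operatorname{ord}(\pi) = k$. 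The main obstacle is not this final homogeneous-components argument but the preceding localization bookkeeping: one has to track how a pole of $T$ at $(\pi)$ propagates through each of the $n$ gauge equations simultaneously, and it is the resulting simultaneous divisibility condition over all $i$ that excludes $\pi$ from being anything other than an associate of a single variable.
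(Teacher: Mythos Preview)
Your proof is correct and follows a genuinely different route from the paper's. The paper passes to the \emph{determinant}: from the gauge relation one derives the scalar equation
\[
\pder{\det(T)}{x_i} \;=\; \Bigl(\frac{\operatorname{tr}(A_i)}{x_i^{p_i+1}} - \frac{\operatorname{tr}(\tilde{A}_i)}{x_i^{\tilde{p}_i+1}}\Bigr)\det(T),
\]
so that $\det(T)$ is itself a solution of a scalar completely integrable Pfaffian system with normal crossings; invoking the structure of formal solutions (Corollary~\ref{gerardsol}) and observing that $\det(T)\in{\rm K}$ cannot carry logarithmic or exponential factors, the paper concludes $\det(T)\in{\rm R}_L$, and likewise $\det(T^{-1})\in{\rm R}_L$, whence $T\in GL_d({\rm R}_L)$.

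Your argument avoids both the determinant reduction and the appeal to the classification theorem. You work directly with the matrix gauge equation, localizing at an arbitrary height-one prime $(\pi)$ away from the coordinate hyperplanes; the pole-clearing substitution $\tilde T=\pi^m T$ together with the Euler-identity contradiction is entirely elementary commutative algebra. What this buys you: independence from the heavy existence machinery behind Theorem~\ref{gerardexistence} and Corollary~\ref{gerardsol}, and direct control over every \emph{entry} of $T$ rather than only $\det(T)$. The paper's route is much shorter to write but leans on deep structure theory, and its final implication --- from $\det(T),\det(T)^{-1}\in{\rm R}_L$ to $T\in{\rm R}_L^{d\times d}$ --- is asserted without justification (and is not true for arbitrary $T\in GL_d({\rm K})$: consider $T=\operatorname{Diag}\bigl((x_1+x_2)^{-1},\,x_1+x_2\bigr)$); your prime-by-prime valuation approach renders that step completely transparent.
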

\begin{proof}
It follows from~\eqref{eq:gauge} that
$$
 \pder{T}{x_i}  = \frac{A_i}{
    x^{p_i+1}_i} T - T \frac{\tilde{A}_i }{x^{\tilde{p}_i +1}_i} , \quad 1 \leq i \leq n.
$$
Thus, we have (see, e.g.~\cite[Proposition 1, proof, pp 6]{key6}):
\begin{equation}
\label{Tform11}
\pder {\det(T)}{x_i} = ( \frac{\operatorname{tr}(A_i)}{x^{p_i+1}} -
\frac{\operatorname{tr}(\tilde{A}_i)}{x^{\tilde{p}_i+1}}) \det(T) , \quad 1 \leq
i \leq n.
\end{equation}
Therefore, $\det(T)$ itself is a solution of a completely integrable Pfaffian
system with normal crossings. By Corollary~\ref{gerardsol}, $\det(T)$ has the
form~\eqref{eq:sol}. Since $T \in {\rm K}^{d \times d}$ then $\det(T)$ is free
of logarithmic and exponential terms. Hence, $\det(T)$ corresponds to a log-free
regular solution of~\eqref{Tform11}. Thus, $\det(T) \in {\rm R}_L$. The same
argument serves to prove that $\det(T^{-1})$ is an element of ${\rm R}_L$ as
well, upon remarking that $T^{-1}[\tilde{A}] = [A]$. Hence,
$\det(T)^{-1} \in {\rm R}_L$ and consequently $T \in R_L^{d \times d}$.
\end{proof}
However, the converse of Proposition~\eqref{Tform} is not true, which
complicates the task of constructing adequate transformations in the reduction
process (see, e.g., Example~\ref{exmnaive} or the shearing transformations of
Section~\ref{shearpfaff}).

\section{Structure of the Main Algorithm}
\label{sec:outline}
If one is only interested in the asymptotic behavior of the
solutions of system~$[A]$, then one can compute the formal invariants from
associated univariate systems as we prove in Section~\ref{sec:invariants}.
\begin{figure}
\label{fig1}
\centering
\begin{tikzpicture}[scale=0.6, every node/.style={scale=0.6}]
\draw [line width=1pt]
  (0,0) rectangle (2.2,2.2) node[label={[align=center]Input\\system}] at
  (1.1,0.4){ };

\path[->] (2.3,2.3) edge (2.9,2.9);
\path[->] (2.3,-0.1) edge (2.9,-0.7);

\draw [line width=1pt]
  (3,3) rectangle (5.2,5.2) node[label={[align=center]First\\component}] at
  (4.1,3.4){ };

\path[->] (5.3,5.2) edge (5.9,5.8);
\path[->] (5.3,4.05) edge (5.9,4.05);
\path[->] (5.3,2.9) edge (5.9,2.3);

\draw[thick,dotted] (4.1,2.5) -- (4.1,-0.3);

\draw [line width=1pt]
  (3,-3) rectangle (5.2,-0.8) node[label={[align=center]Last\\component}] at
  (4.1,-2.6){ };

\path[->] (5.3,-1.9) edge (5.9,-1.4);
\path[->] (5.3,-1.9) edge (5.9,-2.5);

\draw [line width=1pt] (6,-0.8) rectangle (8.2,-1.8)
node[label={[align=center]System w.\\lower dim.}] at (7.1,-1.9){ };

\draw [line width=1pt] (6,-2.0) rectangle (8.2,-3.0)
node[label={[align=center]System w.\\lower dim.}] at (7.1,-3.1){ };

\draw (8.3,-2.5) -- (8.6,-2.5) -- (8.6,-3.6) -- (1.1,-3.6);
\draw (8.3,-1.3) -- (8.6,-1.3) -- (8.6,-3.6);
\path[->] (1.1,-3.6) edge (1.1,-0.1);

\draw [line width=1pt] (6,5.4) rectangle (8.2,7.6)
node[label={[align=center]$\geq$ 2 distinct\\eigenvalues}] at (7.1,5.8){ };

\path[->] (8.3,6.5) edge (8.9,7.1);
\path[->] (8.3,6.5) edge (8.9,5.9);

\draw [line width=1pt]
  (6,3) rectangle (8.2,5.2) node[label={[align=center]Unique\\eigenvalue}] at
  (7.1,3.4){ };

\draw (8.3,4.1) -- (8.7,4.1) -- (8.7,2.4); 
\path[->] (8.7,2.4) edge (8.3,2.4);

\draw [line width=1pt]
  (6,0.6) rectangle (8.2,2.8) node[label={[align=center]Nilpotent}] at
  (7.1,1.2){ };

\path[->] (8.3,1.0) edge (8.9,1.0);

\draw [line width=1pt]
  (9,6.6) rectangle (11.2,7.6) node[label={[align=center]System w.\\lower dim.}] at
  (10.1,6.5){ };

  \draw [line width=1pt] (9,5.4) rectangle (11.2,6.4)
  node[label={[align=center]System w.\\lower dim.}] at (10.1,5.3){ };

\draw (11.3,7.1) -- (11.6,7.1) -- (11.6,8.1) -- (1.1,8.1); 
\draw (11.3,5.9) -- (11.6,5.9) -- (11.6,8.1); 
\path[->] (1.1,8.1) edge (1.1,2.3);

\draw [line width=1pt]
  (9,0) rectangle (11.2,2.2) node[label={[align=center]Apply rank\\reduction\\in
  first var.}] at
  (10.1,0.25){ };

\path[->] (11.3,1.1) edge (11.9,2.0);
\path[->] (11.3,1.1) edge (11.9,0.2);

\draw [line width=1pt] (12,1.2) rectangle (14.2,3.4)
node[label={[align=center]$\geq$ 2 distinct\\eigenvalues}] at (13.1,1.6){ };

\draw (13.1,3.5) -- (13.1,8.5) -- (7.1,8.5);
\path[->] (7.1,8.5) edge (7.1,7.7);

\draw [line width=1pt]
  (12,1.0) rectangle (14.2,-1.2) node[label={[align=center]Nilpotent}] at
  (13.1,-0.6){ };

\path[->] (14.3,-0.1) edge (14.9,-0.1);

\draw [line width=1pt]
  (15,1.0) rectangle (17.2,-1.2) node[label={[align=center]Compute\\ exp. order
    \\in first var.}] at
  (16.1,-0.95){ };

\path[->] (17.3,-0.1) edge (17.9,-0.1);

\draw [line width=1pt]
  (18,1.0) rectangle (20.2,-1.2) node[label={[align=center]Apply rami-\\
    fication in\\ first var.}] at
  (19.1,-0.95){ };

\draw (19.1,-1.3) -- (19.1,-1.9) -- (10.1,-1.9);
\path[->] (10.1,-1.9) edge (10.1,-0.1);
\end{tikzpicture}
\caption{Computing a fundamental matrix of formal solutions by working with one of
  the components, e.g. the first component. The other components would follow
  the chosen component in the uncoupling.}
\end{figure}
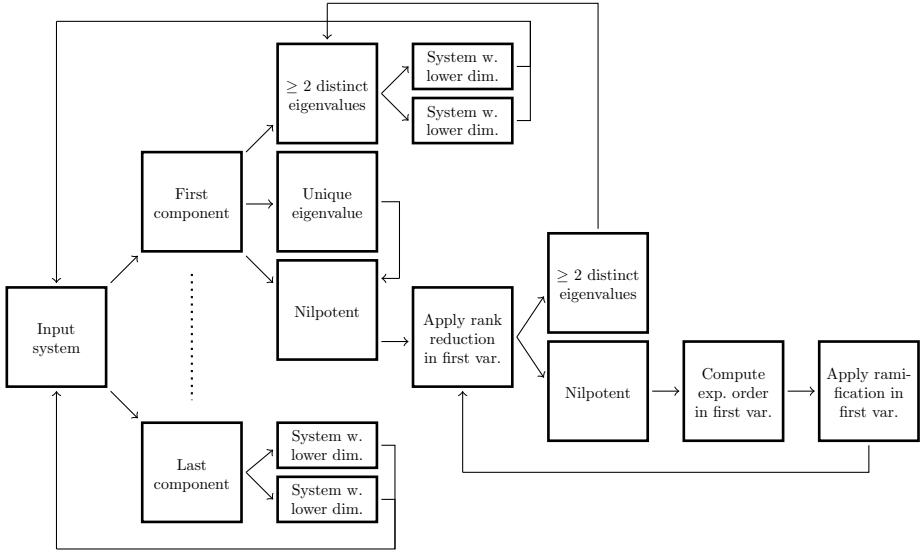
If the singularity is regular or one is interested in computing a full
fundamental matrix of formal solutions, as given by~\eqref{eq:sol}, then, besides
computing these invariants, further involved steps are required, as illustrated
in Example~\ref{ex:sim}. The recursive algorithm we propose generalizes that of
the univariate case given by the first author in~\cite{key24}. At each level of
recursion with input $[A]$, we consider the leading matrix coefficients
$A_{i,0}=A_i(x_i=0)$ (we use both notation interchangeably) and distinguish
between three main cases:

\begin{enumerate}
\item There exists at least one index $i \in \{ 1, \dots, n \}$ such that
  $A_{i,0}$ has at least two distinct eigenvalues.
\item All of the leading matrix coefficients have exactly one eigenvalue and
  there exists at least one index $i \in \{ 1, \dots, n \}$ such that
  $A_{i,0}$ has a nonzero eigenvalue.
\item For all $i \in \{ 1, \dots, n \}$, $A_{i,0}$ is nilpotent.
\end{enumerate}
\vspace{0.3cm}

In order to identify the properties of the eigenvalues of $A_{i}(x_i=0)$, it
suffices to consider the constant matrix $A_i(x=0)$ due to the following
well-known proposition (see, e.g.,~\cite[Proposition 1, pp 8]{key4}
or~\cite[Proposition 2.2]{key9} for a proof within the context of eigenrings):
\begin{proposition}
\label{constev}
The eigenvalues of $A_{i,0}$ , $1 \leq i \leq n$, belong to $\mathbb{C}$.
\end{proposition}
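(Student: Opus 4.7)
My plan is to reduce the claim to showing that every power sum $\mathrm{tr}(A_{i,0}^{k})$, $k\geq 1$, is an element of $\set C$. Once this is established, Newton's identities (valid because $\set C$ has characteristic zero) force the coefficients of the characteristic polynomial of $A_{i,0}$ to lie in $\set C$, and hence its roots --- the eigenvalues --- also belong to $\set C$.

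The workhorse is the complete integrability condition~\eqref{eq:cond}. Fix $i$ (in the main case of interest, $p_{i}\geq 0$) and any $j\neq i$. Substituting $x_{i}=0$ into~\eqref{eq:cond} kills the first term on the right-hand side and leaves the commutator identity
\[
[A_{i,0},\,A_{j}|_{x_{i}=0}]\;=\;-\,x_{j}^{p_{j}+1}\,\pder{A_{i,0}}{x_{j}}
\]
in $\mathrm{Mat}_{d}(\set C[[\bar{x}_{i}]])$. I would then multiply both sides by $A_{i,0}^{k-1}$ and take the trace. By cyclicity of the trace, and because $A_{i,0}^{k-1}$ commutes with $A_{i,0}$, the left-hand side vanishes; the right-hand side equals $-\tfrac{1}{k}\,x_{j}^{p_{j}+1}\,\partial_{x_{j}}\mathrm{tr}(A_{i,0}^{k})$ via the standard identity $\partial_{x_{j}}\mathrm{tr}(A^{k})=k\,\mathrm{tr}(A^{k-1}\,\partial_{x_{j}}A)$. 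Since $x_{j}^{p_{j}+1}$ is a nonzero element of the integral domain $\set C[[\bar{x}_{i}]]$, we conclude $\partial_{x_{j}}\mathrm{tr}(A_{i,0}^{k})=0$. Letting $j$ run through every index different from $i$, the power series $\mathrm{tr}(A_{i,0}^{k})\in\set C[[\bar{x}_{i}]]$ is annihilated by all partial derivatives acting on that ring, and is therefore a constant in $\set C$.

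Applying Newton's identities finishes the argument: the elementary symmetric functions of the eigenvalues of $A_{i,0}$ are $\set C$-polynomials in the power sums $\mathrm{tr}(A_{i,0}^{k})$, so they lie in $\set C$, and the characteristic polynomial of $A_{i,0}$ belongs to $\set C[\lambda]$. The main subtlety I anticipate --- and which I would flag but expect to dispatch routinely --- is the treatment of indices $i$ with $p_{i}<0$: when $p_{i}=-1$ the term $x_{i}^{p_{i}+1}\pder{A_{j}}{x_{i}}$ does not vanish at $x_{i}=0$ and the commutator identity above must be adjusted. This case lies outside the main singular setting the paper works in; in any event the driving cancellation $\mathrm{tr}(A_{i,0}^{k-1}[A_{i,0},M])=0$ (valid for any matrix $M$) remains intact, and the modification amounts to bookkeeping of the additional $\partial_{x_{i}}A_{j}|_{x_{i}=0}$ contribution.
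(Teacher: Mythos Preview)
The paper does not give its own proof of this proposition; it merely cites \cite[Proposition~1]{key4} and \cite[Proposition~2.2]{key9}. Your trace-of-powers argument for the case $p_i\geq 0$ is exactly the standard one behind those references: setting $x_i=0$ in~\eqref{eq:cond} gives $x_j^{p_j+1}\partial_{x_j}A_{i,0}=[A_j|_{x_i=0},A_{i,0}]$, and from this the invariance of the characteristic polynomial follows, whether phrased via $\mathrm{tr}(A_{i,0}^k)$ and Newton's identities as you do, or via the fact that conjugation-type derivations kill conjugation invariants. So for $p_i\geq 0$ your proof is correct and essentially coincides with the cited arguments.

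Your treatment of $p_i=-1$, however, is not a matter of bookkeeping: the statement is simply \emph{false} in that regime, so no adjustment of your commutator identity can rescue it. A one-line counterexample in the paper's own framework is $n=2$, $d=1$, $p_1=-1$, $p_2=1$, $A_1=x_2$, $A_2=1+x_1x_2^{2}$; the integrability condition~\eqref{eq:cond} reads $0=x_2^{2}-x_2^{2}$, yet $A_{1,0}=x_2$ has eigenvalue $x_2\notin\set C$. The extra term $\mathrm{tr}\bigl(A_{i,0}^{k-1}\,\partial_{x_i}A_j|_{x_i=0}\bigr)$ that appears when $p_i=-1$ is genuinely nonzero here and cannot be argued away. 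In practice this is harmless for the paper's purposes (the proposition is only invoked for components with $p_i>0$), but you should state the hypothesis $p_i\geq 0$ explicitly rather than suggesting the remaining case is routine.
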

Then, based on the above classification, a linear or an exponential
transformation will be computed as described in the following subsections.

\subsection{Distinct Eigenvalues: Uncoupling the System Into Systems 
of Lower Dimensions}
\label{diagpfaff}
Whenever there exists an index $i \in\{1,\dots,n \}$ such that $A_{i,0}$ has at
least two distinct eigenvalues, the system can be uncoupled into subsystems of
lower dimensions as shown in Theorem~\ref{blockpfaff}. For a constructive proof,
one may refer to~\cite[Section 5.2, pp 233]{key1}.
\begin{theorem}\label{blockpfaff}
  Suppose that for some $i \in\{ 1,\dots,n \}$, the leading matrix coefficient
  $A_{i,0}$ has at least two distinct eigenvalues. Then there exists a unique
  transformation $T \in GL_d({\rm R})$ of the form
  \[T{(x)} =\left(
    \begin{matrix}
      T^{11} &   T^{12} \\
      T^{21} & T^{22}\\
    \end{matrix}\right)=\left(
    \begin{matrix}
      I_{d'} &   T^{12}{(x)} \\
      T^{21}{(x)} & I_{d-d'}\\
    \end{matrix}\right),
  \]
where $0< d'< d$, such that the transformation $F = T G$ yields
  the equivalent system
  \[x_{i}^{p_{i} +1} \pder{G}{x_i} = \left(
    \begin{matrix}
      \tilde{A}^{11}_{i}(x) & O\\
      O &     \tilde{A}^{22}_{i}(x)\\
    \end{matrix}\right)G, \quad 1 \leq i \leq n .
  \]
  and $\tilde{A}^{11}_{i}(x),\tilde{A}^{22}_{i} (x)$, $i \in \{ 1, \dots, n\}$
  are of dimensions $d'$ and $d-d'$ respectively. 
\end{theorem}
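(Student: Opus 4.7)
The plan is to apply the classical univariate splitting lemma to the single component $A_i$ and then use complete integrability to force the same transformation to block-diagonalize every remaining component simultaneously. I would proceed in three stages: a preliminary normalization, a univariate splitting on the $i$-th equation, and an integrability argument for the other $n-1$ equations.

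First, I would exploit Proposition~\ref{constev}: the eigenvalues of $A_{i,0}$ lie in $\mathbb{C}$, and by hypothesis at least two are distinct. A preliminary constant (hence unimodular, weakly compatible) change of basis can then be applied so that $A_i(x=0)\in\mathbb{C}^{d\times d}$ is block-diagonal with blocks of sizes $d'$ and $d-d'$ whose spectra $\sigma_1,\sigma_2\subset\mathbb{C}$ are disjoint; the grouping is chosen so that any two eigenvalues differing by a positive integer are placed in the same block. Viewing the $i$-th equation as a univariate ODS in $x_i$ with coefficient ring ${\rm R}_{\bar{x}_i}$, I would then invoke the classical formal splitting lemma of Hukuhara--Turrittin type (see~\cite[Section 5.2, p.~233]{key1}) to produce a unique $T\in GL_d({\rm R})$ of the form $I + \sum_{k\geq 1} T_k x_i^k$ with each $T_k$ of zero-diagonal block shape, such that $\tilde{A}_i := (T[A])_i$ is block-diagonal. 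At each order $k$ the coefficient $T_k$ is determined by a (possibly shifted) Sylvester equation whose defining operator $X\mapsto A_{i,0}^{11} X - X A_{i,0}^{22}$ is invertible over ${\rm R}_{\bar{x}_i}$: its determinant is the resultant of two characteristic polynomials with coefficients in $\mathbb{C}$ (by constancy of eigenvalues), and the necessary coprimality is guaranteed by the disjoint, non-resonant grouping of $\sigma_1,\sigma_2$.

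Second, I would show that this same $T$ automatically block-diagonalizes every remaining component $\tilde{A}_j$, $j\neq i$, by leveraging the integrability condition~\eqref{eq:cond} inherited by $T[A]$. Writing each $\tilde{A}_j$ in $(d',d-d')$ block form and inserting $\tilde{A}_i^{12}=\tilde{A}_i^{21}=0$, the $(1,2)$ block of the integrability relation collapses to the homogeneous linear differential equation
\[
\tilde{A}_i^{11}\tilde{A}_j^{12} - \tilde{A}_j^{12}\tilde{A}_i^{22} = x_i^{p_i+1}\,\pder{\tilde{A}_j^{12}}{x_i}.
\]
Expanding $\tilde{A}_j^{12} = \sum_{k\geq 0} B_k x_i^k$ with $B_k\in {\rm R}_{\bar{x}_i}^{d'\times(d-d')}$ and matching coefficients of $x_i^k$, an induction on $k$ exploiting disjointness (and, after the preliminary grouping, non-resonance) of $\sigma_1,\sigma_2$ forces $B_k = 0$ for every $k$; the $(2,1)$ block is handled symmetrically. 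Hence $\tilde{A}_j$ is block-diagonal for every $j\neq i$, and uniqueness of $T$ within the prescribed block class is inherited from the uniqueness clause of the univariate splitting lemma applied to $\tilde{A}_i$.

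The main obstacle will be the inductive vanishing argument when $p_i=0$ (the $i$-th component is regular singular), since then the recurrence for $B_k$ takes the shifted form $(\tilde{A}_{i,0}^{11}-kI)\,B_k - B_k\,\tilde{A}_{i,0}^{22}=0$, whose kernel is nontrivial precisely when an eigenvalue in $\sigma_1$ differs from one in $\sigma_2$ by the positive integer $k$. This is exactly the obstruction removed by the preliminary integer-resonance-aware grouping, which is always realizable under the hypothesis of two distinct eigenvalues. In the irregular case $p_i\geq 1$, the derivative term contributes only at strictly higher orders and the induction proceeds uniformly without this refinement.
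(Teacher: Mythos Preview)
The paper gives no argument of its own here and defers entirely to G\'erard--Sibuya~\cite[Section~5.2]{key1}, where $T$ is constructed directly as a formal power series in all of $x_1,\dots,x_n$, solving Sylvester equations over $\mathbb{C}$ at each order and using the integrability condition throughout so that one and the same $T$ works for every component. Your strategy is different and in principle attractive: determine $T$ from the $i$-th equation alone via a univariate splitting over the coefficient ring ${\rm R}_{\bar{x}_i}$, and then exploit integrability \emph{a posteriori} to conclude that the remaining $\tilde{A}_j$ are automatically block-diagonal. Your second step is correct: the $(1,2)$-block of~\eqref{eq:cond} does collapse to the homogeneous equation you wrote, and the induction on the $x_i$-order---together with your non-resonance grouping for the case $p_i=0$---forces $\tilde{A}_j^{12}=0$.

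The first step, however, has a genuine gap. A \emph{constant} similarity only block-diagonalizes $A_i(x=0)=A_{i,0}(\bar{x}_i=0)$, not the full leading matrix $A_{i,0}\in{\rm R}_{\bar{x}_i}^{d\times d}$. If you then look for $T=I+\sum_{k\ge 1}T_k\,x_i^k$, the order-zero relation in the $i$-th component gives $\tilde{A}_{i,0}=A_{i,0}$, so $A_{i,0}$ itself would have to be block-diagonal---which your preliminary step does not ensure. The repair is to insert an additional, purely algebraic splitting of $A_{i,0}$ over the complete local ring ${\rm R}_{\bar{x}_i}$: since its eigenvalues are constants in $\mathbb{C}$ (Proposition~\ref{constev}) and the two spectral packets are disjoint, a Hensel/Newton lifting produces a unique $P\in GL_d({\rm R}_{\bar{x}_i})$ with identity diagonal blocks that block-diagonalizes $A_{i,0}$; only after this can your univariate $x_i$-splitting with $T'=I+O(x_i)$ be applied. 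Composing and renormalizing the diagonal blocks of $PT'$ back to the identity then yields the $T$ required by the theorem. As written, though, the claimed form $T=I+O(x_i)$ cannot block-diagonalize $A_i$.
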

The theorem can be restated by saying that if one of the components of the
system has a leading matrix coefficient with at least two distinct eigenvalues,
then it can be uncoupled. All of the other components are uncoupled
simultaneously. In the sequel, we aim to determine changes of the independent
variables $x_i$ (ramifications), and construct transformations, which will allow
the reduction of any input system to a system for which the leading matrix
coefficient of at least one of its components has at least two distinct
eigenvalues. This allows us to either arrive at a system with lower Poincar\'e
rank or uncouple it into several subsystems of lower dimensions. The
recursion stops whenever we arrive at regular singular ($p =(0, \dots, 0)$) or scalar
($d=1$) subsystems. The former have been already investigated in~\cite[Chapter
3]{key73} and the resolution of the latter is straightforward.

We remark that, by Proposition~\ref{constev}, it suffices that there exists
$i \in \{1 , \dots, n\}$ such that the constant matrix
$A_{i}(x_1=0, \dots, x_{i}=0, \dots, x_n=0)$ has at least two distinct
eigenvalues.

\subsection{Unique Eigenvalue: Shifting}
\label{shiftpfaff}
For any $i\in\{1,\dots,n\}$ such that $A_{i,0}$ has a unique nonzero
eigenvalue $\gamma_{i} \in \mathbb{C}$, applying the so-called eigenvalue
shifting
\begin{equation*}
F = \operatorname{exp}\left(\int^{\text{\rlap{$x_{i}$}}}\gamma_{i}
  z_{i}^{-p_{i}-1} dz_{i}\right)  G,
\end{equation*}
yields a system $[\tilde{A}]$ whose ${i}^{th}$ component has a nilpotent leading
matrix coefficient:
\begin{eqnarray*}
  x_i^{\tilde{p}_i+1} \pder{G}{x_i} = \tilde{A}_i{(x)} G, \quad \text{where}
\quad \tilde{A}_{i}{(x)} = A_{i}{(x)} - \gamma_{i} I_d .
\end{eqnarray*}
The other components of the system are not modified by this transformation which
is clearly compatible with system $[A]$.

Hence, due to the uncoupling and shifting, we can assume without
loss of generality that for all $i \in \{ 1, \dots, n\}$, the leading matrix
coefficients $A_{i,0}$ are nilpotent.
\subsection{Nilpotency: Rank Reduction and Exponential Order}
\label{rampfaff}
In the univariate case, $n=1$, the nilpotency of $A_{1,0}$ suggests at least one
of the following two steps, as proposed by the first author in~\cite{key24}:
Rank reduction and computation of the exponential order $\omega(A_1)$. The
former reduces $p_1$ to its minimal integer value. It is possible that $p_1$
drops to zero, i.e.\ we arrive at a regular singular system, or that the leading matrix
coefficient of the resulting system has at least two distinct eigenvalues, in
which case we can again uncouple the system. Otherwise,
$\omega(A_1) = \ell / m$ is to be computed, where~$\ell$ and $m$ are
coprime. Then, by setting $x_1 = t_1^{m}$ and applying rank reduction again, it
is proven that we arrive at a system whose leading matrix coefficient has two
distinct eigenvalues. Therefore, the system can be uncoupled (see
Figure~1).

The bivariate case, $n=2$, is studied by the first and third authors of this
paper in~\cite{key101}. For rank reduction, the properties of principal ideal
domains were used. To determine the formal exponential order $\omega(A)$,
associated univariate systems were defined. In this paper, we show that on the
one hand, this approach to determine the formal exponential order remains valid
in the multivariate setting, as we will see in the next section. On the other
hand, the generalization of the rank reduction algorithm to the multivariate
case is nontrivial and is discussed in Section~\ref{sec:rankred}. The
multivariate formal reduction algorithm is then summed up in
Section~\ref{mainpfaff}.

\section{Computing the Formal Invariants}
\label{sec:invariants}
In the univariate case, where the system is given by a single matrix $A_1$,
$\omega(A_1)$ can be computed from the characteristic polynomial of $A_1$, i.e.\
$\det(\lambda I_d - A_1)$, based on the analysis of a Newton polygon associated
with the system~\cite[Theorem 1]{key24}. In this section we show that one need
not search for a generalization of this algorithm to the multivariate case as
the formal invariants of $[A]$, i.e.\ the exponential parts and $\omega(A)$, can
be obtained from an associated univariate system. We do not only give a method
to retrieve these invariants but we also reduce computations to computations
with univariate rather than multivariate formal series.
\begin{definition}
\label{defass}
Given a Pfaffian system $[A]$, we call the following the \textit{associated ODS}
of $[A]$:
  \begin{equation*}
  x_i^{p_i+1} \frac{d}{dx_i} \mathcal{F}_i =
\mathcal{A}_{i}(x_i)\;\mathcal{F}_i, \quad 1 \leq i \leq n,
  \end{equation*}
  where
  $\quad \mathcal{A}_{i}(x_i) := A (x_1=0, \dots, x_{i-1}=0, x_i, x_{i+1}=0,
  \dots, x_n=0).$
\end{definition}
\begin{theorem}
\label{exponentialpfaff}
For every $i \in \{ 1, \dots, n \}$, the $x_i$-exponential part of a
Pfaffian system is equal to the exponential part of the $i^{th}$
component of its associated ODS.
 \end{theorem}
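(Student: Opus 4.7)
The plan is to exploit the Hukuhara–Turrittin form provided by Theorem~\ref{gerardexistence}: there exist positive integers $\alpha_i$ and an invertible $T \in R_t^{d \times d}$ such that, under the ramifications $x_i = t_i^{\alpha_i}$, the transformation $T(t)$ yields an equivalent system $[\tilde{A}]$ whose $i^{th}$ component $\tilde{A}_i(t_i)$ depends only on $t_i$ and is block-diagonal with each diagonal block of the prescribed form. The $x_i$-exponential part of $[A]$ is then read off from this form by formally integrating the $w^{\ell\ell}_i(t_i)/t_i^{\alpha_i \tilde{p}_i + 1}$, as spelled out in Corollary~\ref{gerardsol}.

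The first step is to show that passing to the associated ODS commutes with gauge transformations in the appropriate sense. Specializing $\bar{x}_i = 0$ in the $i^{th}$ gauge equation \eqref{eq:gauge}, and using that $\partial/\partial x_i$ commutes with the substitution of the remaining variables by $0$, one sees that the specialization
\[
S(x_i) := T(0, \dots, x_i, \dots, 0)
\]
realizes a gauge transformation from the $i^{th}$ associated ODS of $[A]$ to that of $[\tilde{A}]$ (after the ramification $x_i = t_i^{\alpha_i}$, which trivially commutes with setting the other variables to $0$). Because $T \in R_t^{d \times d}$ is invertible, $\det T(0) \neq 0$, hence $\det S(0) \neq 0$ and $S$ lies in $GL_d$ of the univariate formal power series ring in $t_i$. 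Thus $S$ is a genuine equivalence between univariate ODS's, and the exponential parts of the two associated ODS's coincide.

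The second step is to observe that, since $\tilde{A}_i$ in the Hukuhara–Turrittin form depends only on $t_i$, the $i^{th}$ associated ODS of $[\tilde{A}]$ \emph{is} simply the univariate system $t_i^{\alpha_i \tilde{p}_i + 1} d\mathcal{F}_i/dt_i = \tilde{A}_i(t_i)\mathcal{F}_i$. Its exponential part is, by construction, the $x_i$-exponential part of $[\tilde{A}]$, which by invariance of exponential parts under transformations in $GL_d(\mathrm{K})$ equals the $x_i$-exponential part of $[A]$. Chaining these equalities yields the claim.

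The main obstacle is the invertibility of $S$ after specialization. Proposition~\ref{Tform} alone would only guarantee $T \in GL_d(R_L)$, and elements of $R_L$ need not admit a meaningful evaluation at $\bar{x}_i = 0$; specializing could annihilate the determinant. What makes the argument go through is the stronger conclusion of Theorem~\ref{gerardexistence} that $T$ belongs to $R_t^{d \times d}$, where evaluation at $\bar{t}_i = 0$ is well defined and preserves the invertibility of $\det T$ thanks to its non-vanishing constant term. The remaining verifications — commutation of partial differentiation with substitution, and of the ramification with the projection onto the $i^{th}$ coordinate axis — are formal.
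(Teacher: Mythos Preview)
Your argument is correct and follows the same core strategy as the paper: specialize the gauge relation coming from a normal-form theorem at $\bar{x}_i=0$, obtain a univariate gauge equivalence between the associated ODS of $[A]$ and an ODS already in normal form, and read off the exponential part. The difference is only in which normal-form theorem is invoked. You appeal directly to the full Hukuhara--Turrittin form of Theorem~\ref{gerardexistence}, using that $T\in GL_d({\rm R}_t)$ so that $\det T$ has non-zero constant term and the specialization $S(t_i)=T|_{\bar{t}_i=0}$ is automatically invertible; since each $\tilde A_i$ there depends on $t_i$ alone, the associated ODS of $[\tilde A]$ is literally the $i^{th}$ component. The paper instead passes through the intermediate Theorem~\ref{gerardtransformation}, where $T\in GL_d({\rm K}_t)$ is only guaranteed to be a product of unimodular matrices and shearings in $t_1$; specialization is then justified by this explicit factored structure rather than by unimodularity, and one must also note that the residual nilpotent blocks $\hat N^{\ell\ell}_1(x_2,\dots,x_n)$ do not contribute to the exponential part. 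Your route is a bit more direct; the paper's route has the advantage of ramifying in a single variable and of matching the transformations actually built in the reduction algorithm.
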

 To establish this result, we rely on a triangular form weaker than the
 Hukuhara-Turrittin's normal form given in Theorem~\ref{gerardexistence}. This
 weaker form suffices to give insight into the computation of~\eqref{eq:sol}.

 The following theorem is an reformulation of a theorem which was first given
 in~\cite[Proposition 3, pp 654]{key3} for the bivariate case, and then
 generalized in~\cite[Theorem 2.3]{key53} to the general multivariate case.
\begin{figure}
\label{fig2}
\centering
\begin{tikzpicture}[scale=0.6, every node/.style={scale=0.6}]
\draw [line width=1pt]
  (0,0) rectangle (1.8,1.8) node[label={[align=center]Input\\system}] at
  (0.9,0.2){ };

\path[->] (1.9,1.9) edge (2.9,2.9);
\path[->] (1.9,-0.1) edge (2.9,-1.1);

\draw [line width=1pt]
  (3,3) rectangle (4.8,4.8) node[label={[align=center]First\\component}] at
  (3.9,3.2){ };

\path[->] (4.9,3.9) edge (5.9,3.9);

\draw[thick,dotted] (3.9,2.5) -- (3.9,-0.7);

\draw [line width=1pt]
  (3,-3) rectangle (4.8,-1.2) node[label={[align=center]Last\\component}] at
  (3.9,-2.8){ };

\path[->] (4.9,-2.1) edge (5.9,-2.1);

\draw [line width=1pt] (6,3) rectangle (7.8,4.8)
node[label={[align=center]First\\associated\\ ODS}] at (6.9,3.1){ };

\path[->] (7.9,3.9) edge (8.9,3.9);

\draw[thick,dotted] (6.9,2.5) -- (6.9,-0.7);

\draw [line width=1pt] (6,-3) rectangle (7.8,-1.2)
node[label={[align=center]Last\\associated\\ ODS}] at (6.9,-2.9){ };

\path[->] (7.9,-2.1) edge (8.9,-2.1);

\draw [line width=1pt] (9,3) rectangle (10.8,4.8)
node[label={[align=center]Exp. part\\in first var.}] at (9.9,3.3){ };

\draw[thick,dotted] (9.9,2.5) -- (9.9,-0.7);

\draw [line width=1pt] (9,-3) rectangle (10.8,-1.2)
node[label={[align=center]Exp. part\\in last var.}] at (9.9,-2.7){ };

\path[->] (10.9,2.9) edge (11.9,1.9);
\path[->] (10.9,-1.1) edge (11.9,-0.1);

\draw [line width=1pt]
  (12,0) rectangle (13.8,1.8) node[label={[align=center]Exp. part}] at
  (12.9,0.4){ };
\end{tikzpicture}
\caption{Computing the exponential part from associated ODS's}
\end{figure}
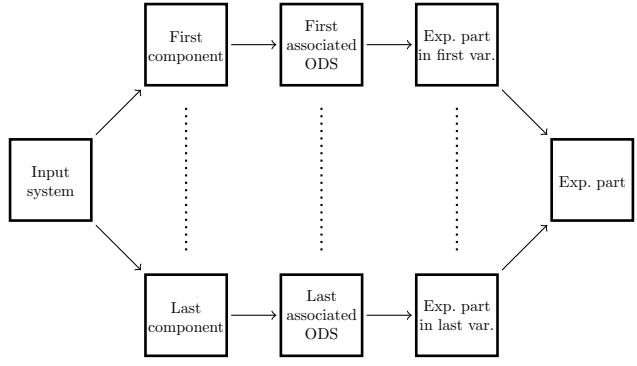
\begin{theorem}
\label{gerardtransformation}
Consider the Pfaffian system $[A]$. There exists a positive integer $\alpha_1$,
and a transformation $T \in GL_d({\rm K_t})$ (where $x_1 = t_1^{\alpha_1}$ and
$x_i = t_i, \; 2 \leq i \leq n$), such that the transformation $F= T G$ yields
the equivalent system:
\begin{equation}
  \label{eq:ger}
  \begin{cases}
    t_1^{\alpha_1 \hat{p}_1 + 1} \pder{G}{t_1} = \hat{A}_{1}(t_1, x_2, \dots, x_n)
\; G, \\[7pt] x_i^{ \hat{p}_i+1} \pder{G}{x_i}= \hat{A}_{i}(x_2, \dots, x_n)\; G,
\quad 2 \leq i \leq n,
\end{cases}
\end{equation}
where
\begin{alignat*}2 & {\tilde{A}}_{1}(t_1, x_2, \dots, x_n) &\; =\;
  &\operatorname{Diag} (\hat{A}^{11}_{1}, \hat{A}^{22}_{1},
  \dots, \hat{A}^{jj}_{1}), \\
  &\hat{A}_{i}(x_2, \dots, x_n)& =\; & \operatorname{Diag}
  (\hat{A}^{11}_{i}, \hat{A}^{22}_{i}, \dots, \hat{A}^{jj}_{i}), \quad 2
  \leq i \leq n,
\end{alignat*}
and for all $\ell \in \{1, \dots, j\}$ and $i \in \{2, \dots, n\}$ the entries
of $\hat{A}^{{\ell} {\ell} }_{{i}}$ lie in ${\rm R}_{\bar{x}_{1}}$. The
$\hat{A}_1^{\ell\ell}$'s are of the form
 \[\hat{A}^{{\ell} {\ell} }_{1} = w^{{\ell} {\ell} }_{1} (t_1) I_{d_{\ell}
   } + t_1^{\alpha_1 \hat{p}_1}(\hat{N}^{{\ell} {\ell} }_{1}(x_2, \dots, x_n) +
   c^{{\ell} {\ell} }_{1} I_{d_{\ell}}),\] where

\begin{itemize}
\item $d_1 + d_2 + \dots + d_j = d$;
\item $w^{{\ell} {\ell} }_{1} (t_1)$ and $c^{{\ell} {\ell} }_{1}$ are as in
  Theorem~\ref{gerardexistence};
\item If $\ell, \ell' \in \{ 1, \dots, j-1 \}$ and $\ell \neq \ell'$, then
  $ w^{ \ell \ell}_{1} (t_1) \neq w^{ \ell' \ell'}_{1} (t_1)$ or
  $c^{ \ell \ell}_{1} - c^{ \ell' \ell'}_{1} \not \in
  \mathbb{Z}$;
\item $\hat{N}^{{\ell} {\ell} }_{1} (x_2, \dots, x_n)$ is a nilpotent
  $d_{\ell} $-square matrix whose entries lie in ${\rm R}_{\bar{x}_{1}}$.
\end{itemize}

\noindent Moreover, $T$ can be chosen as a product of transformations in
$GL_d({\rm R}_t)$ and transformations of the form
$\operatorname{Diag} (t_1^{\beta_1}, \dots, t_1^{\beta_d})$, where
$\beta_1, \dots, \beta_d$ are non-negative integers.$\hfill\qed$
\end{theorem}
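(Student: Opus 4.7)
The plan is to prove Theorem~\ref{gerardtransformation} by induction on the dimension $d$, by running the univariate formal reduction algorithm of Section~\ref{sec:outline} on the first component $A_1$, and then using the complete integrability identities~\eqref{eq:cond} to propagate the resulting structure to the remaining components. By Proposition~\ref{constev}, every leading matrix coefficient encountered in the recursion has its eigenvalues in $\mathbb{C}$, so the three-case classification on $A_{1,0}$ applies at each step.

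First, in the distinct-eigenvalues case, invoking Theorem~\ref{blockpfaff} with $i = 1$ gives a $T^{(0)} \in GL_d({\rm R}) \subset GL_d({\rm R}_t)$ that simultaneously block-diagonalizes every component of $[A]$, and the induction hypothesis applied to each smaller block assembles the required $T$. In the unique-eigenvalue case, the rank reduction of Theorem~\ref{moserpfaff} (composing polynomial conjugations in $GL_d({\rm R}_t)$ with shearings $\operatorname{Diag}(t_1^{\beta_1},\dots,t_1^{\beta_d})$) either uncovers distinct eigenvalues, returning to the first branch, or it does not; in the latter case Theorem~\ref{exponentialpfaff} reads the exponential order $\omega(A_1) = \ell/m$ from the associated ODS, the ramification $x_1 = t_1^{m}$ is applied, and one more rank reduction does produce distinct eigenvalues by the classical Newton-polygon argument underpinning the univariate Hukuhara--Turrittin reduction. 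Recursion terminates since each branch strictly decreases the block dimension or, after a single ramification, returns to the uncoupling branch. The accumulated transformation is thus a product of $GL_d({\rm R}_t)$-elements and $t_1$-shearings, and the polynomials $w^{\ell\ell}_1(t_1)$ together with the constants $c^{\ell\ell}_1$ record the shifts detected along each branch, with the pairwise distinctness built in since we split whenever distinctness exists.

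The main obstacle will be establishing the second bullet of the conclusion: that for $i \geq 2$, the resulting $\hat A_i$ are block-diagonal with the same pattern as $\hat A_1$ and that each diagonal block $\hat A^{\ell\ell}_i$ has entries in ${\rm R}_{\bar x_1}$. For the block pattern, I would examine the integrability identity~\eqref{eq:cond} for the pair $(1,i)$ on the reduced system: the off-diagonal blocks of $\hat A_i$ relative to the $\hat A_1$ pattern satisfy a homogeneous matricial equation whose leading $t_1$-coefficient is controlled by the spectral gap between distinct $(w^{\ell\ell}_1, c^{\ell\ell}_1)$ pairs, and a coefficient-wise comparison in $t_1$ forces them to vanish --- the multivariate analogue of the classical spectral-gap argument. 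For the $t_1$-independence of the diagonal blocks, the scalar $w^{\ell\ell}_1(t_1) I_{d_\ell}$ commutes with everything and drops out of the same identity, and the remaining perturbation $t_1^{\alpha_1 \hat p_1}(\hat N^{\ell\ell}_1 + c^{\ell\ell}_1 I_{d_\ell})$ yields a recursion on the $t_1$-Taylor coefficients of $\hat A^{\ell\ell}_i$ that forces all positive-degree terms to vanish. The careful tracking of the ramification exponent $\alpha_1$ and of the nilpotent perturbation $\hat N^{\ell\ell}_1$ through the recursion is the most delicate part of the argument.
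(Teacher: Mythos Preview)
The paper does not prove Theorem~\ref{gerardtransformation} at all: the theorem is stated with a terminal $\qed$ and is explicitly attributed to Charri\`ere~\cite[Proposition~3]{key3} (bivariate case) and Charri\`ere--G\'erard~\cite[Theorem~2.3]{key53} (general case). So there is no ``paper's own proof'' to compare against; your proposal is an attempt to supply an independent argument.

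That attempt, however, is circular as written. You invoke Theorem~\ref{exponentialpfaff} to read off the exponential order $\omega(A_1)$ of the first component from the associated ODS, but in the paper Theorem~\ref{exponentialpfaff} is \emph{deduced from} Theorem~\ref{gerardtransformation} (see the proof immediately following the latter's statement). You cannot use it here. You could try to bypass this by computing $\omega(A_1)$ directly from the first component viewed as an ODS over the base field ${\rm K}_{\bar x_1}$, but then the classical Newton-polygon argument gives exponential polynomials with coefficients a priori in $\overline{{\rm K}_{\bar x_1}}$, not in $\mathbb{C}$; showing they actually lie in $\mathbb{C}$ (as the theorem asserts via ``$w^{\ell\ell}_1$ and $c^{\ell\ell}_1$ are as in Theorem~\ref{gerardexistence}'') is precisely part of the content of the Charri\`ere--G\'erard result and is not covered by Proposition~\ref{constev}, which concerns only the leading matrix coefficient.

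A second gap: the ``rank reduction of Theorem~\ref{moserpfaff}'' you invoke is really Theorem~\ref{moserpfaff2}, and in the multivariate setting ($n\geq 3$) that theorem carries freeness hypotheses on certain column and row modules (Section~\ref{colpfaff} and Proposition~\ref{gauss3Pfaffian}) which are not automatically satisfied. Your inductive scheme assumes rank reduction always succeeds, so the argument as stated does not go through unconditionally. The cited Charri\`ere--G\'erard proof uses lattice-theoretic and cyclic-vector techniques that avoid these algorithmic freeness constraints.
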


\begin{proof}[Proof of Theorem~\ref{exponentialpfaff}] Upon the change of
  independent variable $x_1=t_1^{\alpha_1}$, the transformation $F = T G$ yields
  system~\eqref{eq:ger} for which the first component is given by
\begin{equation*}
t_1^{\alpha_1 \hat{p}_1 + 1} \pder{G}{t_1} = \tilde{A}_{1}(t_1, x_2, \dots, x_n)  \; G.
\end{equation*}
with the notations and properties as in Theorem~\ref{gerardtransformation}. It
then follows from~\eqref{eq:gauge} that
\begin{equation}
\label{relation}
t_1^{\alpha_1 {\hat{p}}_1 + 1}\pder{T}{t_1}= \alpha_1
A_{1}(x_1=t_1^{\alpha_1})\; T - T {\hat{A}}_{1}.
\end{equation}
Due to the particular choice of $T$ in Theorem~\ref{gerardtransformation}, we
can set $x_i=0$, $2 \leq i \leq n$ in~\eqref{relation}. In particular, the
relation between the leading terms
\begin{alignat*}2 &\mathcal{A}_{1}(x_1=t_1^{\alpha_1}) &\;:=\; &A_{1} (x_1 =
  t_1^{\alpha_1}, x_2=0,\dots,x_n=0),\\ &\hat{\mathcal{A}}_{1} (t_1) &\; :=
  \;&\hat{A}_{1} (t_1, x_2=0, \dots, x_n=0),\\ &\mathcal{T} (t_1) &\; :=
  \;&T(t_1, x_2=0,\dots,x_n=0),
\end{alignat*}
is given by
\begin{equation*}
  t_1^{\alpha_1 \hat{p}_1 + 1} \pder{\mathcal{T}}{t_1} = \alpha_1
{\mathcal{A}}_{1}(x_1=t_1^{\alpha_1})\; \mathcal{T} - \mathcal{T}
{\hat{\mathcal{A}}}_{1}.
\end{equation*}
Hence, the systems given by $\alpha_1{\mathcal{A}}_{1}(x_1=t_1^{\alpha_1})$
(respectively ${\mathcal{A}}_{1}$) and ${\hat{\mathcal{A}}}_{1}$ are
equivalent. It follows that they have the same formal invariants.  Clearly, the
same result can be obtained for any of the other components via permutation with
the first component. Noting that the $x_i$-exponential part is independent of
$\bar{x}_i$ completes the proof.
 \end{proof}
 For univariate systems, the \textit{true Poincar\'e rank} $p_{true}(A_1)$ is
 defined as the smallest integer greater or equal than the exponential order  $\omega(A_1)$ of 
 the system $[A_1]$. It is known that this integer coincides with the minimal value for
 $p_1$ which can be obtained upon applying any  non-ramified linear transformation 
 to~$A_1$. With the help of Theorem~\ref{exponentialpfaff} we can establish the
 analogous result for multivariate systems. We first give the following
 definition:
\begin{definition}
  Let $[A]$ be a Pfaffian system and for any $i\in\{1,\dots,n\}$, let
  $p_{true}(A_i)$ be the minimal integer value which bounds the exponential order
  in $x_i$, i.e.
  \[
  p_{true}(A_i) - 1 < \omega(A_i)  \leq p_{true}(A_i).
  \]
  Then $p_{true}(A)=(p_{true}(A_1),\dots,p_{true}(A_n))$ is
  called the true Poincar\'e rank of $A$.
\end{definition}
It is shown by Deligne and van den Essen separately in~\cite{key20,key21}, in
the multivariate setting that a necessary and sufficient condition for system
$[A]$ to be regular singular is that each individual component $A_i$, considered
as a system of ordinary differential equations in $x_i$, with the remaining
variables held as transcendental constants, is regular singular. As a
consequence, system $[A]$ is regular singular if and only if its true
Poincar\'{e} rank is $(0,0, \dots,0)$. To test this regularity, algorithms
available for the univariate case of $n=1$ (e.g.~\cite{key26,key57}) can be
applied separately to each of the individual components. The following corollary
follows directly from Theorem~\ref{exponentialpfaff}, showing that the $i^{th}$
component of the true Poincar\'e rank of system $[A]$ is equal to the  true
Poincar\'e rank of the $i^{th}$ associated (univariate) ODS.
\begin{corollary}
\label{prank}
For all $1\leq i\leq n$ we have
\[
  p_{true}(A_i) =p_{true}(\mathcal{A}_i).
\]
\end{corollary}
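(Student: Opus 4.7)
The plan is to derive this as a straightforward consequence of Theorem~\ref{exponentialpfaff} by unwinding the definitions of true Poincar\'e rank and formal exponential order. The statement is essentially asserting that a quantity ($p_{true}$) defined purely in terms of the exponential part transfers from $A_i$ to the associated ODS $\mathcal{A}_i$ because the exponential parts themselves coincide.

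First, I would recall the definition: $p_{true}(A_i)$ is characterized by $p_{true}(A_i) - 1 < \omega(A_i) \leq p_{true}(A_i)$, and the analogous characterization holds for the univariate system $\mathcal{A}_i$ (which is just the classical univariate notion of true Poincar\'e rank). So the corollary reduces to the claim $\omega(A_i) = \omega(\mathcal{A}_i)$ for each $i \in \{1,\dots,n\}$.

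Next, I would appeal to Definition~\ref{katzpfaff}: the exponential order $\omega(A_i)$ is determined entirely by the minimum order in $x_i$ appearing across the diagonal entries $q_{i,j}(x_i^{-1/s_i})$ of the $x_i$-exponential part $Q_i$. The same formula governs $\omega(\mathcal{A}_i)$, computed from the exponential part of the $i$-th associated ODS. By Theorem~\ref{exponentialpfaff}, these two exponential parts coincide as diagonal matrices of polynomials (up to the ordering of diagonal entries, which is irrelevant here). Consequently the two collections $\{m_{i,j}\}$ of minimum orders coincide, and taking $-\min_j m_{i,j}$ on both sides yields $\omega(A_i) = \omega(\mathcal{A}_i)$.

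Finally, substituting into the defining inequality for true Poincar\'e rank gives $p_{true}(A_i) = p_{true}(\mathcal{A}_i)$, which is the desired equality. There is essentially no obstacle in this argument beyond invoking Theorem~\ref{exponentialpfaff}; the only subtlety worth mentioning is that the corollary is stated componentwise, so no interaction between different $i$'s needs to be tracked, and the definition of $p_{true}(\mathcal{A}_i)$ for the univariate associated system agrees with the one recalled in the paragraph preceding the definition of $p_{true}(A)$.
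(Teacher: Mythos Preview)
Your proposal is correct and follows essentially the same route as the paper: both reduce the claim to the equality $\omega(A_i)=\omega(\mathcal{A}_i)$, which is an immediate consequence of Theorem~\ref{exponentialpfaff}, and then read off $p_{true}(A_i)=p_{true}(\mathcal{A}_i)$ from the defining inequality. The paper compresses this into a single displayed chain of inequalities, whereas you spell out why coincidence of exponential parts implies coincidence of exponential orders, but the argument is the same.
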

\begin{proof}
  For the proof, it suffices to remark that\footnote{Stronger bounds are given
    in~\cite[Remark 3]{key24}}
  \[p_{true}({\mathcal{A}}_{i}) -1 < \omega(A_i) = \omega({\mathcal{A}}_{i} ) \leq
    p_{true}({\mathcal{A}}_{i}).\qedhere\]
\end{proof}
From this Corollary it does not yet follow for the multivariate case, as in the
univariate case, that it is possible to apply a compatible transformation to
system~$[A]$ such that all the $p_i$ simultaneously equal the $p_{true}(A_i)$. We
investigate this possibility in the next section.

In summary, the formal exponential order, the true Poincar\'e rank, and most
importantly the $Q_i$'s in~\eqref{eq:sol}, can be obtained efficiently by
computations with univariate rather than multivariate series, making use of
existing algorithms and packages. As mentioned in the introduction, this
exponential part is of central importance in applications since it determines
the asymptotic behavior of the solution in the neighborhood of an irregular
singularity. To compute a full fundamental matrix of formal solutions, we still
have to determine suitable rank reduction transformations. Transformations which
reduce the rank of the associated systems do not suffice, since they are not
necessarily compatible. We therefore proceed to develop a multivariate rank
reduction algorithm.

\section{Rank Reduction}
\label{sec:rankred}
In this section, we are interested in the rank reduction of Pfaffian systems,
more specifically, the explicit computation of a transformation which, given
system~$[A]$, yields an equivalent system whose Poincar\'e rank is the true
Poincar\'e rank. We show that, under certain conditions, the true Poincar\'e
ranks of the subsystems of~$[A]$ can be attained simultaneously via a
transformation compatible with~$[A]$.

We first generalize Moser's reduction criterion~\cite{key19} to multivariate
systems. We then establish an extension of the algorithm we gave
in~\cite{key101} for the bivariate case to the multivariate setting. The main
problem in the treatment of multivariate systems is that the entries of the
$A_i$ do not necessarily lie in a principal ideal domain. This is a common
problem within the study of systems of functional equations. The same obstacle
arises in~\cite{key5} and in the analogous theory of formal decomposition of
commuting partial linear difference operators established in~\cite{key3239}.

\subsection{Generalized Moser's Criterion}
\label{criterionpfaff}
For univariate systems, Moser's criterion characterizes systems for which rank
reduction is possible. To adapt this criterion to our setting, we
follow~\cite{key41,key19} and define the generalized Moser rank and the Moser
invariant of a system $[A]$ as the following $n$-tuples of rational numbers:
\begin{eqnarray*}
  m (A) = (m (A_1) , \dots, m (A_n)), & \text{where} &  m (A_{i}) =
\max\left(0, p_i + \frac{\operatorname{rank}(A_{i,0})}{d}\right),\\ 
\mu (A) = (\mu (A_1) , \dots, \mu (A_n)), & \text{where} &  \mu (A_{i})
= \operatorname{min}(\{ m (T(A_i)) \mid T \in GL_d({\rm R}_L) \}).
\end{eqnarray*} 
We remark that $\mu(A)$ is well-defined due to Corollary~\ref{prank}.
\begin{definition}
  Consider the partial order $\leq$ on $\set Q^n$ for which $\ell < k$ holds if
  and only if $\ell_j\leq k_j$ for all $1\leq j\leq n$ and there is at least one
  index for which the inequality is strict. The system $[A]$ is called
  reducible if $\mu(A) < m(A)$. Otherwise it is said to be irreducible.
\end{definition}
In other words, system~$[A]$ is irreducible whenever each of its components
is. In particular, it is easy to see from this definition that a system $[A]$ is
regular singular if and only if $\mu(A_i) \leq 1$ for all
$i \in \{ 1, \dots, n\}$, i.e.\ the true Poincar\'e rank is a zero $n$-tuple,
which coincides with Deligne's and van den Essen's criterion for regular
singular systems.

\subsection{Main Theorems}
\label{mainthmpfaff}
With the help of compatible transformations and the criterion established in
Section~\ref{criterionpfaff}, we study the rank reduction of some component
$A_i$ of~$[A]$ which is given by~\eqref{eq:sys}. We will see that rank reduction
can be carried out for each component independently without affecting the
individual Poincar\'e ranks of the other components. We fix
$i \in \{ 1, \dots, n\}$ and we recall that one can expand the components
of~$[A]$ with respect to $x_i$. In particular, we have,
\begin{equation*}
  x_i^{p_i +1 }\pder{F}{x_i} = A_i(x) F = (A_{i,0}({\bar{x}_i})+ A_{i,1}
({\bar{x}_i})x_i + A_{i,2}({\bar{x}_i}) x_i^2 + \dots ) F.
\end{equation*}
 We set
 \[r:= \operatorname{rank} (A_{i,0}).\]
 For all $i$ we can assume without loss of generality that $A_{i,0}$ is not the
 zero matrix and thus the reducibility of system~$[A]$ coincides with the
 existence of an equivalent system such that for some~$i$ the rank of the
 leading matrix coefficient
 $\tilde{A}_{i,0}$ is less than $r$. We establish the
 following theorem:
\begin{theorem}
\label{moserpfaff}
Consider a Pfaffian system~$[A]$ and suppose that $m(A_i) > 1$ for some index
$i \in \{1, \dots, n\}$. If $\mu(A_i) < m(A_i)$ then the polynomial
\begin{equation}
\label{eq:theta}
\theta_{i} (\lambda) := {x_i}^{r} \det(\lambda I + \frac{A_{i,0}}{x_i} + A_{i,1}
)|_{x_i=0}
\end{equation}
vanishes identically in $\lambda$.  
\end{theorem}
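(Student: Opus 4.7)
The plan is to adapt the proof of the classical univariate Moser criterion (\cite{key19}, with the reformulation in \cite{key41}) to the multivariate setting. The key observation is that $\theta_i(\lambda)$ depends only on the $i$-th component $A_i$ and, within it, only on the two initial coefficients $A_{i,0}$ and $A_{i,1}$, which we view as matrices with entries in $R_{\bar{x}_i}$. Thus the entire argument is local in $x_i$, with the remaining variables $\bar{x}_i$ serving as parameters, and we will phrase things over the local ring $R_{\bar{x}_i}$.

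First I would normalize $A_{i,0}$. Since both $m(A_i)$ and $\theta_i(\lambda)$ are invariant under conjugation by a matrix $P \in GL_d(R_{\bar{x}_i})$ (such a $P$ commutes with $\partial/\partial x_i$ and leaves the Poincar\'e rank unchanged), I choose $P$ so that the first $d-r$ columns of $P^{-1} A_{i,0} P$ vanish. With $A_{i,0}$ in this block form I then apply the shearing transformation $S = \operatorname{Diag}(x_i I_{d-r}, I_r)$, which lies in $GL_d(R_L)$, and compute $S(A_i)$ via~\eqref{eq:gauge}. A direct calculation shows that the leading $x_i$-coefficient of the resulting $S(A_i)$ has a characteristic polynomial that, up to a monomial factor in $x_i$, coincides with $\theta_i(\lambda)$. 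Hence the shearing strictly reduces $m(A_i)$ if and only if $\theta_i$ vanishes identically.

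It remains to show that the existence of \emph{any} reducing $T \in GL_d(R_L)$ forces the existence of a shearing reduction of the form above, up to composition with unit matrices in $GL_d(R)$. In the univariate setting this follows from the Smith normal form of $T$ over the DVR $\mathbb{C}((x))$; here $R_L$ is neither a PID nor local, so more care is needed, and this is the main obstacle. My plan is to base-change to the extension $K_{\bar{x}_i}((x_i))$, which is a DVR in $x_i$, and obtain a factorization $T = U \cdot \operatorname{Diag}(x_i^{a_1}, \ldots, x_i^{a_d}) \cdot V$, then reduce the exponents $a_j$ to $\{0,1\}$ by composing with appropriate constant and unit factors. The delicate point is to check that, after these reductions, the unit factors actually lie in $GL_d(R)$ rather than only in $GL_d(K_{\bar{x}_i}((x_i)))$; for this I would appeal to Nakayama's lemma over the local ring $R_{\bar{x}_i}$, in the same spirit as the arguments developed later in Section~\ref{colpfaff}. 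Combining the three steps produces a shearing reduction of $A_i$, and by the second paragraph this forces $\theta_i \equiv 0$.
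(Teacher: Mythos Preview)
Your opening observation is exactly right: $\theta_i(\lambda)$ depends only on $A_{i,0}$ and $A_{i,1}$, and the whole question is local in $x_i$ with $\bar x_i$ as parameters. But you then commit to phrasing everything over the \emph{ring} $R_{\bar x_i}$, and this is where the plan runs into trouble that is entirely self-inflicted.

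Two concrete issues. First, your normalization step --- choosing $P\in GL_d(R_{\bar x_i})$ so that $d-r$ columns of $P^{-1}A_{i,0}P$ vanish --- is precisely the problem (P) of Section~\ref{colpfaff}, and as shown there it requires the column module of $A_{i,0}$ to be free over $R_{\bar x_i}$. You do not have this hypothesis, and it can fail. Second, your ``delicate point'' is indeed delicate, and in fact does not work as stated: a general $T\in GL_d(R_L)$ may have poles in the variables $\bar x_i$, so after extracting powers of $x_i$ via Smith normal form over $K_{\bar x_i}((x_i))$ there is no reason the unit factors should descend to $GL_d(R)$; Nakayama over $R_{\bar x_i}$ does not help with denominators in $\bar x_i$.

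The paper's proof avoids all of this by passing to the \emph{field} $K_{\bar x_i}$ rather than the ring $R_{\bar x_i}$. One simply regards the $i$-th component as a univariate ODS in $x_i$ with coefficients in $K_{\bar x_i}$, treating the remaining variables as transcendental constants. The reducing transformation $T\in GL_d(R_L)$ lies in $GL_d(K_{\bar x_i}((x_i)))$, the Moser rank $m(A_i)$ and the polynomial $\theta_i(\lambda)$ are unchanged by this base change, and Moser's original criterion \cite[Theorem~1]{key19} applies verbatim to give $\theta_i\equiv 0$. That is the entire argument. Over a field, column reduction is Gaussian elimination, Smith normal form is immediate, and your descent problem evaporates because there is nothing to descend. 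Your sketch is essentially a reproof of Moser's univariate theorem carried out over a ring where it need not hold; the fix is simply to base-change to $K_{\bar x_i}$ at the outset and quote Moser.
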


\begin{proof}
  Suppose that there exists a transformation $T(x) \in {\rm R}_L^{d \times d}$
  which reduces $m(A_i)$ for some $i \in \{ 1, \dots, n\}$. That is, setting
  $T[A] := [\tilde{A}]$, we have:
  \begin{equation} \label{ppp} \operatorname{m}(\tilde{A}_{i}) <
    \operatorname{m}(A_{i}) .\end{equation} The $i^{th}$ component of
  system $[A]$ can also be viewed as a system of ordinary differential equations
  (ODS) in $x_i$ upon considering the $x_j$'s for $j \in \{1, \dots, n\}$,
  $j \neq i$, to be transcendental constants. Hence, by~\eqref{ppp}
  and~\cite[Theorem 1]{key19}, $\theta_i(\lambda) = 0$.
\end{proof}
Intuitively, the characteristic polynomial of $A_i/x_i$ is used
in~\eqref{eq:theta} to detect the true Poincar\'e rank of the $i^{th}$ component
via the valuation of $x_i$. It turns out that the valuation is only influenced
by $A_{i,0}$ and $A_{i,1}$. Even though the true Poincar\'e rank can be
determined from the associated ODS, the criterion is essential as it leads to
the construction of the transformation~$T$.

The converse of Theorem~\ref{moserpfaff} also hold true under certain conditions
(see Theorem~\ref{moserpfaff2}): If $\theta_i (\lambda)$ vanishes for some index
$i \in \{ 1, \dots, n\}$ then we can construct a compatible transformation
$T \in GL_d({\rm R}_L)$ which reduces $m(A_i)$. We will establish this result
and describe the steps of the algorithm after establishing a series of
intermediate ones. We will need two kinds of transformations, shearing
transformations and column reductions, which we explain in the next two
subsections.

\subsection{Shearing Transformation}
\label{shearpfaff}
Consider the expansion $A_i=\sum_{k=0}^\infty A_{i,k}x_i^k$ of $A_i$ with
respect to $x_i$ for a fixed $i\in\{1,\dots,n\}$. Shearing transformations are
polynomial transformations that, roughly speaking, are used to exchange blocks
between the $A_{i,k}$'s. The ones we consider here are of the form
\[S = \operatorname{Diag} (x_{i}^{\beta_1}, \dots , x_{i}^{\beta_d}),\]
with $\beta_j \in\{0,1\}$ for all $j \in\{1,\dots,d\}$.  We illustrate the
shearing effect of such a transformation in an easy example.

\begin{example}
\label{ex:shearing}
We apply a shearing transformation to a univariate system $[A]$ given by
$x_1^{p_1+1} \frac{\partial}{\partial x_1} F = A_1 F$ where
$A_1 =\sum_{k=0}^\infty A_{1,k} x^k_1$, with
\[A_{1,0} = \left(\begin{matrix} 1 & 2 & 0 & 0 \\ 0 & 0 & 0 & 0 \\ -2 & 0 & 0 &
      0 \\ 0 & 1 & 0 & 0\end{matrix}\right)\quad\text{and}\quad A_{1,1} =
  \left(\begin{matrix} 4 & 9 & 2 & -5 \\ 8 & 9 & 0 & 0 \\ 8 & 6 & 2 & 4 \\ 5 & 6
      & 3 & 3\end{matrix}\right) . \]
The transformation $S = \operatorname{Diag}(x_1,x_1,1,1)$ yields the equivalent
system $[\tilde{A}]$ given by
$$x_1^{p_1+1} \frac{\partial}{\partial x_1} F = \tilde{A_1} F \quad \text{where}
\quad \tilde{A_1} = S^{-1} A_1 S - x_1^p \operatorname{Diag}(1,1,0,0) .$$
As we are interested in the effect of $S$ on the first few terms of $A_1$, we
look into $S^{-1} A_1 S$ which exchanges the upper right and lower left
$2\times2$ blocks of the $A_1$ as exhibited in the following diagram:

\vspace{0.2cm}
\begin{center}
\begin{tikzpicture}
\draw [line width=1pt,pattern=section]
  (0,0.5) rectangle (0.5,1);
\draw [line width=1pt,pattern=section]
  (0,0) rectangle (0.5,0.5);
\draw [line width=1pt,pattern=section]
  (0.5,0) rectangle (1,0.5) node[right] {$x_1^0\quad\ +$};
\draw [line width=1pt,pattern=section]
  (0.5,0.5) rectangle (1,1);

\draw [line width=1pt,thickness=4pt, pattern=section4]
  (3,0.5) rectangle (3.5,1);
\draw [line width=1pt,thickness=4pt, pattern=section4]
  (3,0) rectangle (3.5,0.5);
\draw [line width=1pt,thickness=4pt, pattern=section4]
  (3.5,0) rectangle (4,0.5) node[right] {$x_1^1\quad\ +$};
\draw [line width=1pt,thickness=4pt, pattern=section4]
  (3.5,0.5) rectangle (4,1);

\draw [line width=1pt,thickness=0.5pt, pattern=section2]
  (6,0.5) rectangle (6.5,1);
\draw [line width=1pt,thickness=0.5pt, pattern=section2]
  (6,0) rectangle (6.5,0.5);
\draw [line width=1pt,thickness=0.5pt, pattern=section2]
  (6.5,0) rectangle (7,0.5) node[right] {$x_1^2\quad\ +$};
\draw [line width=1pt,thickness=0.5pt, pattern=section2]
  (6.5,0.5) rectangle (7,1);

\draw [line width=1pt,thickness=2pt, pattern=section3]
  (9,0.5) rectangle (9.5,1);
\draw [line width=1pt,thickness=2pt, pattern=section3]
  (9,0) rectangle (9.5,0.5);
\draw [line width=1pt,thickness=2pt, pattern=section3]
  (9.5,0) rectangle (10,0.5)
  node[right] {$x_1^3\quad\ +\dots$};
\draw [line width=1pt,thickness=2pt, pattern=section3]
  (9.5,0.5) rectangle (10,1);
\end{tikzpicture}

\begin{tikzpicture}
\draw [line width=1pt,pattern=section]
  (0,0.5) rectangle (0.5,1);
\draw [line width=1pt,pattern=section]
  (-0.1,-0.1) rectangle (0.4,0.4);
\draw [line width=1pt,pattern=section]
  (0.5,0) rectangle (1,0.5) node[right] {$x_1^0\quad\ +$};
\draw [line width=1pt,pattern=section]
  (0.6,0.6) rectangle (1.1,1.1);

\draw [line width=1pt,thickness=4pt, pattern=section4]
  (3,0.5) rectangle (3.5,1);
\draw [line width=1pt,thickness=4pt, pattern=section4]
  (2.9,-0.1) rectangle (3.4,0.4);
\draw [line width=1pt,thickness=4pt, pattern=section4]
  (3.5,0) rectangle (4,0.5) node[right] {$x_1^1\quad\ +$};
\draw [line width=1pt,thickness=4pt, pattern=section4]
  (3.6,0.6) rectangle (4.1,1.1);

\draw [line width=1pt,thickness=0.5pt, pattern=section2]
  (6,0.5) rectangle (6.5,1);
\draw [line width=1pt,thickness=0.5pt, pattern=section2]
  (5.9,-0.1) rectangle (6.4,0.4);
\draw [line width=1pt,thickness=0.5pt, pattern=section2]
  (6.5,0) rectangle (7,0.5) node[right] {$x_1^2\quad\ +$};
\draw [line width=1pt,thickness=0.5pt, pattern=section2]
  (6.6,0.6) rectangle (7.1,1.1);

\draw [line width=1pt,thickness=2pt, pattern=section3]
  (9,0.5) rectangle (9.5,1);
\draw [line width=1pt,thickness=2pt, pattern=section3]
  (8.9,-0.1) rectangle (9.4,0.4);
\draw [line width=1pt,thickness=2pt, pattern=section3]
  (9.5,0) rectangle (10,0.5) node[right] {$x_1^3\quad\ +\dots$};
\draw [line width=1pt,thickness=2pt, pattern=section3]
  (9.6,0.6) rectangle (10.1,1.1);

\path[->] (3.7,1.2) edge [out= 160, in= 30] (1,1.2);
\path[->] (6.7,1.2) edge [out= 160, in= 30] (4,1.2);
\path[->] (9.7,1.2) edge [out= 160, in= 30] (7,1.2);

\path[->] (0.2,-0.2) edge [out= -30, in= -160] (3.1,-0.2);
\path[->] (3.2,-0.2) edge [out= -30, in= -160] (6.1,-0.2);
\path[->] (6.2,-0.2) edge [out= -30, in= -160] (9.1,-0.2);
\end{tikzpicture}

\begin{tikzpicture}
\draw [line width=1pt,pattern=section]
  (0,0.5) rectangle (0.5,1);
\draw [line width=1pt]
  (0,0) rectangle (0.5,0.5) node[pos=0.5] {0};
\draw [line width=1pt,pattern=section]
  (0.5,0) rectangle (1,0.5) node[right] {$x_1^0\quad\ +$};
\draw [line width=1pt,thickness=4pt,pattern=section4]
  (0.5,0.5) rectangle (1,1);

\draw [line width=1pt,thickness=4pt, pattern=section4]
  (3,0.5) rectangle (3.5,1);
\draw [line width=1pt,pattern=section]
  (3,0) rectangle (3.5,0.5);
\draw [line width=1pt,thickness=4pt, pattern=section4]
  (3.5,0) rectangle (4,0.5) node[right] {$x_1^1\quad\ +$};
\draw [line width=1pt,thickness=0.5pt, pattern=section2]
  (3.5,0.5) rectangle (4,1);

\draw [line width=1pt,thickness=0.5pt, pattern=section2]
  (6,0.5) rectangle (6.5,1);
\draw [line width=1pt,thickness=4pt, pattern=section4]
  (6,0) rectangle (6.5,0.5);
\draw [line width=1pt,thickness=0.5pt, pattern=section2]
  (6.5,0) rectangle (7,0.5) node[right] {$x_1^2\quad\ +$};
\draw [line width=1pt,thickness=2pt, pattern=section3]
  (6.5,0.5) rectangle (7,1);

\draw [line width=1pt,thickness=2pt, pattern=section3]
  (9,0.5) rectangle (9.5,1);
\draw [line width=1pt,thickness=0.5pt, pattern=section2]
  (9,0) rectangle (9.5,0.5);
\draw [line width=1pt,thickness=2pt, pattern=section3]
  (9.5,0) rectangle (10,0.5) node[right] {$x_1^3\quad\ +\dots$};
\draw [line width=1pt]
  (9.5,0.5) rectangle (10,1) node[pos=0.5] {$*$};
\end{tikzpicture}
\end{center}
\vspace{0.1cm}

\noindent Consequently, $A_{1,0}$ and $A_{1,1}$ become
  \[\tilde{A}_{1,0} = \left(\begin{matrix} 1 & 2 & 2 & -5 \\ 0 & 0 & 0 & 0 \\ 0
        & 0 & 0 & 0 \\ 0 & 0 & 0 & 0\end{matrix}\right),\ \ \tilde{A}_{1,1} =
    \left(\begin{matrix} 4 & 9 & * & * \\ 8 & 9 & * & * \\ -2 & 0 & 2 & 4 \\ 0 &
        1 & 3 & 3\end{matrix}\right).\]
  Note that the lower left zero entries in $\tilde{A}_{1,0}$ come from
  $A_{1,-1}$, which is a zero matrix. In return, the upper right block of
  $A_{1,0}$ is sent to $A_{1,-1}$. Since it is a zero block, this transformation
  does not introduce denominators. The upper right entries in $\tilde{A}_{1,1}$
  come from $A_{1,2}$. With this transformation, we reduced the rank of
  $A_{1,0}$ from $2$ to $1$.
\end{example}
More generally, let $[A]$ be a multivariate Pfaffian system. The transformation
$F = SG$, where $S$ is a shearing transformation in~$x_i$, yields an equivalent
system with:
\[
\begin{cases} 
\tilde{A}_{i} = S^{-1} A_{i} S - x_{i}^{p_{i}}\operatorname{Diag}({\beta_1}, \dots ,
{\beta_d}), \\
\tilde{A}_{j}  = S^{-1} {A_{j}} S, \quad 1 \leq j \neq i \leq n , \end{cases}\]
where 
\[S^{-1} {A_{j}} S = \left(\begin{matrix} A_{j, 11} & A_{j,12}
      x_{i}^{\beta_2 - \beta_1} & \dots &  A_{j, 1d} x_{i}^{\beta_d - \beta_1} \\[5pt]
      A_{j, 21} x_{i}^{\beta_1 - \beta_2} & A_{j, 22} & \dots & A_{j, 2d}
      x_{i}^{\beta_d- \beta_2} \\[5pt] \vdots & \vdots & \dots & \vdots
      \\[5pt] A_{j, d1} x_{i}^{\beta_1 - \beta_d} & A_{j, d2} x_{i}^{\beta_2
        - \beta_d} & \dots & A_{j, dd} \end{matrix}\right)\text{ for all }1\leq
  j \leq n.\]

The shearing in Example~\ref{ex:shearing} reduced the rank of the leading matrix
coefficient and was compatible with the system, i.e.\ it did not introduce
undesired denominators of $x_{i}$, because of the column reduced form of
$A_{i,0}$. The input system is not always given in such a
form for $A_{i,0}$, and so we investigate in the following subsection how to
achieve it.

\subsection{Column Reduction}
\label{colpfaff}
To enable rank reduction, we alternate between the shearing transformation and
transformations which reduce some columns of a leading  matrix coefficient to
zero. For this we discuss in this section the following problem.

\medskip
\begin{quote}(P) \quad Given a square matrix
  $A=[v_{1},\dots,v_{d}]\in\operatorname{Mat}_{d\times d}({\rm R})$ (where
  $v_{i}$ denotes the $i$th column) of rank $r<d$ when considered as an element
  of $\operatorname{Mat}_{d\times d}({\rm K})$, find a
  transformation $T\in GL_d({\rm R})$ such that the last~$d-r$ columns of
  $TAT^{-1}$ are zero. \end{quote}

\medskip
Before considering the algorithmic aspects,
we first discuss the existence of such a transformation. As the next example
shows, the desired transformation does not necessarily exist for any matrix $A$.
\begin{example}
The matrix
\[
\left(\begin{matrix}
0          & x_1 & x_2\\
0   & 0 & 0\\
0 & 0 & 0
\end{matrix}\right)
\]
is obviously of rank 1 over $\rm K$. There is, however, no
transformation~$T\in GL_3({\rm R})$ such that $TAT^{-1}$ contains only one
non-zero column.
\end{example}
Consider the finitely generated ${\rm R}^d$-submodule
$M:=\langle v_{1},\dots,v_{d}\rangle$. We call it the column module of $A$. In
order to construct a suitable transformation for bivariate Pfaffian systems (for
which the leading matrix coefficients are univariate) the authors
of~\cite{key101} use the fact that $\set C[[x_1]]$ and $\set C[[x_2]]$ are
principal ideal domains and hence that every finitely generated submodule of a
free module over this ring is free. We generalize this for the multivariate case
by showing in Corollary~\ref{cor:basis} that the freeness of the column module
$M$ is a necessary and sufficient condition for the existence of a
transformation that meets our requirements. This is a direct consequence of
Nakayama's Lemma for local rings.

\begin{theorem} \cite[Theorem 2.3, pp 8]{key900}
  \label{theo:naka}
  Let $\rm R$ be a local ring, $\mathcal{M}$ its maximal ideal and let $M$ be a
  finitely generated $\rm R$-module. Then $v_1,\dots,v_r\in M$ form a minimal
  set of generators for $M$ if and only if their images
  $\bar{v}_1,\dots,\bar{v}_r$ under the canonical homomorphism
  $M\rightarrow M/\mathcal{M}M$ form a basis of the vector space
  $M/\mathcal{M}M$ over the field ${\rm R}/\mathcal{M}$.
\end{theorem}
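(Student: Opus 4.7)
The plan is to reduce both implications to the classical form of Nakayama's Lemma, namely: if $N$ is a finitely generated $R$-module with $\mathcal{M}N=N$, then $N=0$. This form is typically proved by a determinant trick, but I would take it as the starting point and derive the stated equivalence from it.

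For the ``if'' direction, assume $\bar v_1,\dots,\bar v_r$ form a basis of $M/\mathcal{M}M$. Let $N=\langle v_1,\dots,v_r\rangle\subseteq M$. Since the $\bar v_i$ already span $M/\mathcal{M}M$, one has $M=N+\mathcal{M}M$, and passing to the quotient gives $\mathcal{M}(M/N)=M/N$. The module $M/N$ is finitely generated (a quotient of the finitely generated $M$), so Nakayama yields $M=N$, i.e.\ the $v_i$ generate. Minimality is immediate: any strictly smaller generating set would reduce modulo $\mathcal{M}$ to a spanning set of the $r$-dimensional vector space $M/\mathcal{M}M$ with fewer than $r$ elements, which is impossible.

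For the ``only if'' direction, suppose $v_1,\dots,v_r$ is a minimal generating set. The images $\bar v_i$ certainly span $M/\mathcal{M}M$, so I only need to verify linear independence. Assume a nontrivial relation, which without loss of generality can be written $\bar v_i=\sum_{j\neq i}\bar c_j\bar v_j$ for some $\bar c_j\in R/\mathcal{M}$. Lifting each $\bar c_j$ to $c_j\in R$, we obtain $v_i-\sum_{j\neq i}c_jv_j\in\mathcal{M}M$. Let $N=\langle v_j:j\neq i\rangle$. Then $M/N$ is cyclic, generated by the class of $v_i$, and the displayed membership shows that this generator lies in $\mathcal{M}(M/N)$, whence $\mathcal{M}(M/N)=M/N$. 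Nakayama again gives $M=N$, contradicting minimality of the original generating set.

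The only genuine technical point is to ensure that the finite generation hypothesis of Nakayama is available at each invocation; both $M$ and the quotients $M/N$ that appear are finitely generated, so this is automatic. Everything else is purely formal manipulation of lifts of scalars modulo $\mathcal{M}$ and of the canonical projection $M\to M/\mathcal{M}M$, so no further obstacle arises.
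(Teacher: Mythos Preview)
Your argument is correct and is the standard derivation of this statement from the classical form of Nakayama's Lemma. Note, however, that the paper does not supply its own proof of this theorem: it is quoted as \cite[Theorem~2.3, pp~8]{key900} and used as an input to the subsequent Corollary~\ref{cor:basis}. So there is no ``paper's proof'' to compare against; your proposal simply fills in the omitted (and well-known) justification, and does so cleanly.
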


The central consequence of Theorem~\ref{theo:naka} for us is that if $M$ is
free, a module basis of $M$ can be chosen among the columns of $A$. We adapt the
theorem to our situation to show that we can bring $A$ into a column-reduced
form if and only if its column module is free.

\begin{corollary}
  \label{cor:basis}
  Let $A\in\operatorname{Mat}_{d\times d}({\rm R})$ be of rank $r$ over $\rm K$
  and let $M$ be the module generated by the columns of $A$. If $M$ is free,
  then there exists a subset $B$ of the columns in $A$ with $r$ elements such
  that $B$ is a module basis of $M$. Furthermore, $B$ is also a $\rm K$-vector
  space basis of the column space of $A$.
\end{corollary}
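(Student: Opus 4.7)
The plan is to apply Nakayama's lemma (Theorem~\ref{theo:naka}) to extract a suitable subset of columns of $A$, and then exploit the inclusion ${\rm R}\subset{\rm K}$ together with the freeness of $M$ to promote this subset to an ${\rm R}$-module basis.

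First I observe that ${\rm R}=\set C[[x_1,\dots,x_n]]$ is a local Noetherian ring with maximal ideal $\mathcal{M}=(x_1,\dots,x_n)$ and residue field $\set C$, so Theorem~\ref{theo:naka} applies to the finitely generated module $M$. Since ${\rm K}=\operatorname{Frac}({\rm R})$ is flat over ${\rm R}$, tensoring the inclusion $M\hookrightarrow{\rm R}^d$ with ${\rm K}$ identifies $M\otimes_{\rm R}{\rm K}$ with the ${\rm K}$-span of the columns of $A$ inside ${\rm K}^d$, whose dimension is $r$ by hypothesis. If $M$ is free of rank $s$, then $M\otimes_{\rm R}{\rm K}\cong{\rm K}^s$ and $M/\mathcal{M}M\cong\set C^s$, so $s=r$ and $\dim_{\set C}(M/\mathcal{M}M)=r$.

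Next, since the columns $v_1,\dots,v_d$ generate $M$, their images $\bar v_1,\dots,\bar v_d$ span the $r$-dimensional $\set C$-vector space $M/\mathcal{M}M$. I extract indices $i_1<\dots<i_r$ such that $\bar v_{i_1},\dots,\bar v_{i_r}$ form a $\set C$-basis of $M/\mathcal{M}M$, and set $B=\{v_{i_1},\dots,v_{i_r}\}$. By the reverse direction of Theorem~\ref{theo:naka}, $B$ is a minimal generating set of $M$. The main obstacle is to upgrade $B$ from a generating set to a basis, i.e.\ to prove ${\rm R}$-linear independence. The key observation is that, since $B$ generates $M$ over ${\rm R}$, it also generates $M\otimes_{\rm R}{\rm K}$ over ${\rm K}$; but this vector space has dimension $r=|B|$, so $B$ must be ${\rm K}$-linearly independent. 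Because $M$ is torsion-free (being a submodule of ${\rm R}^d$ and ${\rm R}$ a domain), the natural map $M\to M\otimes_{\rm R}{\rm K}$ is injective, and therefore ${\rm K}$-linear independence descends to ${\rm R}$-linear independence.

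Putting the pieces together, $B$ is simultaneously an ${\rm R}$-module basis of $M$ and (via the injection into ${\rm K}^d$) a ${\rm K}$-vector space basis of the column space of $A$, yielding both claims. The plan hinges on freeness only at one point — namely, ensuring $\dim_{\set C}(M/\mathcal{M}M)=r$ so that the number of columns extracted by Nakayama matches the ${\rm K}$-rank — which is why the hypothesis that $M$ be free cannot be dropped.
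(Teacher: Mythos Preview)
Your proof is correct and follows essentially the same route as the paper: apply Nakayama's lemma to select a minimal generating set among the columns, then use the passage between ${\rm R}$ and ${\rm K}$ to verify it is both an ${\rm R}$-basis and a ${\rm K}$-basis. The paper's argument is terser---it asserts directly that Nakayama yields a \emph{basis} (not merely a minimal generating set) and then deduces ${\rm K}$-independence from ${\rm R}$-independence---whereas you first pin down the rank via $M\otimes_{\rm R}{\rm K}$ and argue independence in the reverse direction, making explicit the step the paper leaves implicit.
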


\begin{proof}
  By Theorem~\ref{theo:naka} we can find a basis $B=\{b_1,\dots,b_k\}$ of $M$
  among the columns of~$A$. By definition, the $b_i$ are linearly independent
  over $\rm R$, so they are also linearly independent over
  $\rm K$ (otherwise, multiplying a linear relation in
  $\rm K$ with a common denominator yields a relation in
  $\rm R$). Since $B$ is a basis of the column module, it also contains a
  generating set of the $\rm K$-vector space generated by
  the columns of $A$.
\end{proof}

In theory, Corollary~\ref{cor:basis} would allow the computation of a unimodular
column reduction transformation as required in $(P)$ simply via Gaussian
elimination. Assume we are given a matrix $A$ and already know a subset
$B=(b_1,\dots,b_r)$ of the columns of $A$ which forms a basis of the column
module.  Let $v$ be a column vector of $A$ which is not in~$B$. Then, since $B$
is a vector space basis, there exist $c_1,\dots,c_r\in {\rm K}$ such that
\[c_1b_1+\dots +c_rb_r=v.\]
By assumption, $B$ is also a module basis, so there also exist
$d_1,\dots,d_r\in {\rm R}$ with
\[d_1b_1+\dots +d_rb_r=v.\]
The $b_i$ are linearly independent, and therefore the cofactors of $v$ with
respect to $B$ are unique. It follows that $c_i=d_i$ for all $1\leq i \leq
r$.

The main algorithmic difficulty stems from the fact that not all formal power
series admit a finite representation, and even if the initial system is given in
a finite form, the splitting transformation as in Theorem~\ref{blockpfaff} does
not preserve finiteness. In particular, we face two main problems when working
with truncated power series:
\begin{description}
\item [$(P1)$] Detecting the correct rank and the linear independent columns of
  $A$
\item [$(P2)$] If we know the independent columns, a column reduction
  transformation computed after truncation is not uniquely determined.
\end{description}
These computational problems arise for general multivariate and for bivariate
systems, but were not addressed in previous algorithmic works on this
topic~\cite{key101,key5,key73}. Before we propose our resolution, we illustrate
both problems in the following example:

\begin{example}
\label{exm:trunc}
Consider the matrix 
\[\left(\begin{matrix}
x & 0 & x^2 & x^2+x\\
0 & x & x & x\\
1 & 0 &0 & 1
\end{matrix}\right).
\]
Here, the first three columns $v_1,v_2,v_3$ are linearly
independent and generate the column module. A linear combination of the fourth
column $v_4$ is given by
\[1\cdot v_1 + 0\cdot v_2+1\cdot v_3 = v_4.\]
When truncating at order 1, the system is given as 
\[\left(\begin{matrix}
x & 0 & 0 & x\\
0 & x & x & x\\
1 & 0 &0 & 1
\end{matrix}\right)
\]
The original rank cannot be determined from the truncated matrix. Furthermore,
even if we know that $v_1,v_2,v_3$ are linearly independent, there are several
linear combinations of the fourth column after truncation:
\[1\cdot v_1 + 0\cdot v_2+1\cdot v_3 = v_4.\]
\[1\cdot v_1 + 1\cdot v_2+0\cdot v_3 = v_4.\]
The cofactors of the second linear combination are not the truncated cofactors
of the first. It can not be extended with higher order terms to a suitable linear
combination over the formal power series ring without truncation.
\end{example}

We can solve both $(P1)$ and $(P2)$ with the help of minors of the original
system. Let $r$ be the rank of $A$. Then there exists a nonzero $r\times r$
submatrix $B$ of $A$ whose determinant is nonzero. Let $k$ be the order of the
determinant. If we take the truncated system
$\tilde{A} = A \operatorname{rem} x^{k+1}$, the same submatrix $\tilde{B}$ in
$\tilde{A}$ will have a non-zero determinant modulo $x^{k+1}$ and we can
therefore identify in~$\tilde{A}$ which columns in $A$ are linearly
independent. This resolves $(P1)$ as long as the truncation order $k$ is chosen
big enough.

Next assume that for instance the first $r$ columns of $A$ are linearly
independent, i.e.\ we can choose $B$ such that its columns correspond to
$v_1,\dots,v_r$. Let $k$ be as above, $\ell$ be a positive integer and let $v$
be a column vector that is linearly dependent on the columns of $B$. Then there
exist $c_1,\dots,c_r\in\set C[[x_1,\dots,x_n]]$ such that
\[B\cdot(c_1,\dots,c_r)=v.\]
By Cramer's rule, we know that the $c_i$ are given by
\begin{equation}
\label{cramer}
c_i=\frac{\det(B_i)}{\det(B)},
\end{equation}
where $B_i$ is the matrix obtained by replacing the $i^{th}$ column of $B$ by
$v$. Rewriting Equation~\eqref{cramer} gives
\begin{equation}
\label{cramer2}
\det(B)c_i-\det(B_i)=0,
\end{equation}
and this equation allows the computation of $c_i$ by coefficient comparison. In
particular, we are guaranteed to obtain the correct $c_i$ up to order $\ell$ if
in~\eqref{cramer2} we replace $B$ by $\tilde{B}$, its truncation at order
$\ell+k+1$, and $B_i$ by $\tilde{B}_i$, the truncation of $B_i$ at order
$\ell+k+1$. This resolves $(P2)$. 

This approach is based on the fact that there is a truncation order $k$ such
that we can find a submatrix of maximal dimension with non-zero determinant. We
have to remark, however, that by the nature of formal power series, it is in
general not possible to tell a priori if a given truncation is high
enough.  Furthermore, we emphasize that it is in general not
  possible to draw a conclusion about the freeness of the column module from the
  integral relations among the truncated column vectors, since any linear
  combination of the form $c_1v_1+\dots+c_{d-1}v_{d-1}-v_d=0 \mod x^k$ can
  require a non-unit cofactor $c_d$ for higher truncation orders. However, if no
  integral relations can be found with the above method, also the column module
  without truncation cannot be free. Both observations lead to the following
  practical approach. The full algorithm is carried out with a given truncation
  order. If the output is correct (compared to the invariant exponential part which can be obtained by Theorem~\ref{exponentialpfaff}), we are done. If not, we increase the
  truncation order until we get a correct output or arrive at a point where no
  integral relations can be found anymore.  This procedure necessarily
terminates, since there exists a suitable truncation order.  

One should note
that not every $\rm K$-vector space basis of the column space of $A$ is also a
module basis. So, in the worst case, $\binom{d}{r}$ submatrices have to be
tested to obtain a module basis.



\subsection{Converse of Theorem~\ref{moserpfaff}}

\label{proofpfaff}
We consider again a multivariate system $[A]$ as in~\eqref{eq:sys}. We fix
$i \in \{ 1, \dots, n \}$ and investigate the rank reduction of its ${i}^{th}$
component given by
\begin{equation}
\label{first}
x_{i}^{p_{i}+1} \pder{F}{x_i} = A_{i}F = (A_{i,0} + A_{i,1} x_{i} + A_{i,2}
x^2_{i} + A_{i,3} x^3_{i} + \dots )F ,
\end{equation}
where the matrices $A_{i,j}$ have their entries in ${\rm R}_{\bar{x}_i}$ and the
algebraic rank of $A_{i,0}$ is denoted by $r$. We recall that we defined
$\bar{x}_{i}:= (x_1, \dots, x_{i-1},x_{i+1}, \dots, x_n)$ and ${\rm R}_{\bar{x}_i}
:= \mathbb{C}[[\bar{x}_{i} ]]$.

The establishment of the converse of Theorem~\ref{moserpfaff} for the reduction in $x_i$ follows
essentially the steps of that of the bivariate case which was given
in~\cite{key101}. The construction requires successive application of
transformations in $GL_d(R_{\bar{x}_i})$ and shearing transformations in
$x_i$. We remark that when applying a transformation
$T \in GL_d({\rm R}_{\bar{x}_{i}})$ on the $i^{th}$ component,~\eqref{eq:gauge}
reduces to
$$ \tilde{A}_{i}= T^{-1}  A_{i} T .$$
Given~\eqref{first}, if the column module of ${A}_{i,0}$ is free, then one can
compute a transformation $U_1 \in GL_d({\rm R}_{\bar{x}_{i}})$ such that
  $$U_1^{-1}{A}_{i,0} U_1  =  \left(\begin{matrix} B^{11} & O
      \\ B^{21} & O \end{matrix}\right)$$
  has rank $r$, entries in ${\rm R}_{\bar{x}_i}$ and with diagonal blocks of
  sizes $r\times r$ and $(d-r)\times(d-r)$ respectively. Let~$v$ be the rank of
  $B^{11}$.  If also the column module of $B^{11}$ is free, then one
  can compute a transformation $U_2 \in GL_r({\rm R}_{\bar{x}_i})$
  such that
  $$U_2^{-1}B^{11} U_2  =  \left(\begin{matrix} E^{11} & O  \\
       E^{21} & O \end{matrix}\right)$$
  has rank $v$, entries in ${\rm R}_{\bar{x}_i}$ and with diagonal blocks of
  sizes $v\times v$ and $(r-v)\times(r-v)$ respectively. We set
  $U : = \operatorname{Diag} (U_2,I_{d-r})\cdot U_1$.  Then the leading
  coefficient $\tilde{A}_{i,0}$ of the equivalent system $U[A_{i}]$ has the
  following form:
\begin{equation} \label{gaussformPfaffian} 
\tilde{A}_{i,0}
  \;=\; \left(\begin{matrix}\tilde{A}_{i,0}^{11} & O& O\\[5pt]
      \tilde{A}_{i,0}^{21} & O_{r-v}  & O \\[5pt] \tilde{A}_{i,0}^{31} &
      \tilde{A}_{i,0}^{32} & O_{d-r} \end{matrix}\right)\end{equation}
with diagonal blocks of sizes $v\times v$, $(r-v)\times(r-v)$  and
$(d-r)\times(d-r)$ respectively for some $0\leq v < r$ and where 
\[
  \left(\begin{matrix} \tilde{A}_{i,0}^{11} \\[5pt]
      \tilde{A}_{i,0}^{21} \end{matrix}\right) \quad \text{and} \quad
  \left(\begin{matrix}\tilde{A}_{i,0}^{11} & O \\[5pt] \tilde{A}_{i,0}^{21} & O
      \\[5pt] \tilde{A}_{i,0}^{31} & \tilde{A}_{i,0}^{32} \end{matrix}\right)\]
are $r \times v$ and $d \times r$ matrices of full column ranks $v$ and $r$
respectively. Clearly, $U$ is compatible with
system~$A$ since it is unimodular.\\
From now on, we assume that the leading coefficient $A_{i,0}$ of~\eqref{first}
is in form~\eqref{gaussformPfaffian}. In particular, we require the
column module of $A_{i,0}$ and the column module of $B^{11}$ as given above to
be free. We then partition $A_{i,1}$ in accordance with $A_{i,0}$ and set
\begin{equation}
\label{glambdaformPfaffian} 
G_{A_{i}} (\lambda) := \left(\begin{matrix} A_{i,0}^{11} & O & A_{i,1}^{13}
    \\[5pt] A_{i,0}^{21} & O & A_{i,1}^{23} \\[5pt] A_{i,0}^{31} & A_{i,0}^{32} &
    A_{i,1}^{33}+ \lambda I_{d-r}\end{matrix}\right) .\end{equation}
The polynomial $\det(G_{A_{i}} (\lambda))$ vanishes identically in
$\lambda$ if and only if $\theta_{i} (\lambda)$ given by~\eqref{eq:theta}
does. In fact, let
$D(x_i)= \operatorname{Diag}(x_i I_{r}, I_{d-r})$. Then we can write
${x_i}^{-1} A_i(x) = N(x) D^{-1}(x_i)$ where $N(x) \in {\rm R}^{d \times d}$, and
set $D_0 = D(x_i=0)$, $N_0= N(x_i=0)$. Then we have
\begin{eqnarray*} \det (G_{A_i}(\lambda)) & =& \det (N_0 + \lambda D_0) = \det
(N + \lambda D)|_{x_i=0} \\ & = & (\det(\frac{A}{x_i} + \lambda I_d) \det
(D))|_{x_i=0} \\ &=& (\det (\frac{A_{i,0}}{x_i} + A_{i,1} + \lambda I_d)
{x_i}^{r} )|_{x_i=0} = \theta_{i} (\lambda).  \end{eqnarray*}
Moreover, $G_{A_{i}} (\lambda)$ has an additional important application within the
construction of a desired transformation as we show in the following proposition:
\begin{proposition}
\label{gauss3Pfaffian}
Suppose that $m(A_{i}) >1$ and $\det(G_{A_{i}} (\lambda))$ is identical to
zero. If the row module of $G_{A_{i}} (\lambda =0)$ is free then there exists a
transformation $Q({\bar{x}_i})$ in $GL_d({\rm R}_{\bar{x}_i})$ with
$\det(Q) = \pm 1$, compatible with system~$A$, such that the matrix
$G_{\tilde{A}_{i}} (\lambda))$ has the form
\begin{equation} 
\label{particularform3Pfaffian} 
G_{\tilde{A}_{i}} (\lambda) = \left(\begin{matrix} A_{i,0}^{11} & O & U_1& U_2
\\[5pt] A_{i,0}^{21} & O & U_3 & U_4 \\[5pt] V_1 & V_2 & W_1 + \lambda I_{d- r
-\varrho} & W_2 \\[5pt] M_1 & O & M_3 & W_3 + \lambda
I_\varrho \end{matrix}\right) ,\end{equation} where $0 \leq \varrho \leq d-r$,
and
\begin{alignat}2
\label{particularconditionbPfaffian} 
  & \operatorname{rank}\left(\begin{matrix} A_{i,0}^{11} &U_1\\[5pt] A_{i,0}^{21} &
      U_3 \\[5pt] M_1& M_3 \end{matrix}\right) &=\; &
\operatorname{rank}\left(\begin{matrix} A_{i,0}^{11} & U_1\\[5pt] A_{i,0}^{21} &
U_3 \end{matrix}\right), \\[10pt] \label{particularconditionaPfaffian}
&\operatorname{rank} \left(\begin{matrix} A_{i,0}^{11} & U_1\\[5pt] A_{i,0}^{21} &
U_3 \end{matrix}\right) &<\; &r. 
\end{alignat}
\end{proposition}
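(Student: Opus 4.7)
The plan is to adapt the bivariate construction of~\cite{key101} to the multivariate setting, substituting Corollary~\ref{cor:basis} (a consequence of Nakayama's Lemma) for the principal-ideal-domain arguments used previously.

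First, I would exploit the hypothesis $\det(G_{A_i}(\lambda))\equiv 0$. Regarded as a polynomial in $\lambda$ of degree at most $d-r$, multilinearity in the last $d-r$ columns expresses it as $\sum_{S\subseteq\{r+1,\dots,d\}} \lambda^{|S|} m_S$, where $m_S$ is a signed minor of $G_{A_i}(0)$ obtained by deleting the rows and columns indexed by $S$. Identical vanishing forces every such minor to be zero; combining these vanishing minors yields both the row-rank deficiency of $G_{A_i}(0)$ and the key rank bound
\[
  \operatorname{rank}\begin{pmatrix} A_{i,0}^{11} & A_{i,1}^{13} \\ A_{i,0}^{21} & A_{i,1}^{23}\end{pmatrix} < r,
\]
which ultimately underwrites condition~\eqref{particularconditionaPfaffian}.

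Second, I would use the freeness hypothesis on the row module of $G_{A_i}(0)$ to build the compensating transformation. Applied to the transpose of $G_{A_i}(0)$, Corollary~\ref{cor:basis} supplies a basis of this row module selected from among its rows, together with $R_{\bar{x}_i}$-valued cofactors for the non-basis rows. Up to an initial unimodular row/column permutation, these non-basis rows may be taken to occupy the last $\varrho$ positions of the third row block. Since the second column block of $G_{A_i}(0)$ is supported only on $A_{i,0}^{32}$, restricting these dependencies to that column block yields a matrix $T\in GL_{d-r}(R_{\bar{x}_i})$ whose action on the last $d-r$ rows sends the last $\varrho$ rows of $A_{i,0}^{32}$ to zero. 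Recording the remaining cofactors (those expressing the new $M$-rows as $R_{\bar{x}_i}$-combinations of the first $r$ rows) in a matrix $Y\in R_{\bar{x}_i}^{(d-r)\times r}$, the desired transformation is
\[
  Q = \begin{pmatrix} I_r & 0 \\ Y & T^{-1} \end{pmatrix},
\]
which is unimodular with $\det(Q)=\pm 1$.

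Finally, I would verify that $\tilde{A}_i = Q^{-1}A_iQ$ puts $G_{\tilde{A}_i}(\lambda)$ in the form~\eqref{particularform3Pfaffian}. The block-lower-triangular structure of $Q$ (with identity on the top-left) preserves the zero last $d-r$ columns of $A_{i,0}$, so $\tilde{A}_{i,0}$ remains in~\eqref{gaussformPfaffian}. The dependencies built into $T$ produce the zero entry in the $(4,2)$ subblock of $G_{\tilde{A}_i}(\lambda)$ under the refined row partition $v,\,r-v,\,d-r-\varrho,\,\varrho$. Condition~\eqref{particularconditionbPfaffian} then reads off as the statement that the new $M$-rows lie in the $R_{\bar{x}_i}$-row span of the first $r$ rows (encoded in $Y$), and condition~\eqref{particularconditionaPfaffian} inherits from the rank bound of step one, since $U_1,U_3$ are obtained from $A_{i,1}^{13},A_{i,1}^{23}$ by the rank-preserving column operation induced by $T^{-1}$. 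The main obstacle lies in step two: without the freeness hypothesis the cofactors could only be guaranteed in $K_{\bar{x}_i}$, so $Q$ would fail to be unimodular; Corollary~\ref{cor:basis} is precisely what replaces the principal ideal domain machinery of~\cite{key101} (where $\mathbb{C}[[x_j]]$ is a PID for $j\neq i$) in the genuinely multivariate setting.
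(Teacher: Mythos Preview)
Your proposal is correct and follows the same approach as the paper: the paper's own proof is a one-line reference to the construction in~\cite[Proposition~3]{key101}, noting only that the freeness hypothesis on the row module of $G_{A_i}(0)$ is what allows that bivariate construction to go through over $R_{\bar{x}_i}$. You have correctly identified that Corollary~\ref{cor:basis} is the replacement for the principal-ideal-domain arguments, and your outline of how the unimodular $Q$ is built from the resulting integral row relations matches the intended construction.
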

\begin{proof}
  If the row module of $G_{A_{i}} (\lambda =0)$ is free then the transformation $Q ({\bar{x}_i})$ can be constructed as in the proof
  of~\cite[Proposition 3]{key101} for the bivariate case.
\end{proof}
\noindent We remark that in the particular case of
$v=0$,~\eqref{gaussformPfaffian} is given by
$$
\tilde{A}_{i,0} (\bar{x}_{i} ) = \left(\begin{matrix}
    O_{r}  & O \\[5pt]
    \tilde{A}_{i,0}^{32} & O_{d-r} \end{matrix}\right) , \quad \text{with} \quad
\operatorname{rank}(\tilde{A}_{i,0}^{32}) = r .$$
Consequently, it can be easily verified that~\eqref{particularform3Pfaffian} is
given by
$$
G_{\tilde{A}_{i}} (\lambda) = \left(\begin{matrix} O_r & U_3 \\[5pt] V_2 & W_1 +
    \lambda I_{d- r} \end{matrix}\right) , \quad \text{and} \quad \varrho = 0
. $$
\begin{proposition}
\label{shearingPfaffian}
If $m(A_{i}) >1$ and $\det(G_{A_{i}} (\lambda)) \equiv 0$ is as
in~\eqref{particularform3Pfaffian} with
conditions~\eqref{particularconditionbPfaffian}
and~\eqref{particularconditionaPfaffian} satisfied, then the component $A_{i}$
of $A$ in~\eqref{eq:sys} is reducible and reduction can be carried out with the
shearing $F = S(x_i)\;G$ where
 $$\begin{cases} S(x_i)=\operatorname{Diag}(x_i I_r , I_{d-r-\varrho}, x_i
   I_\varrho) \quad \text{if } \varrho \neq 0 \\[5pt]
   S(x_i)=\operatorname{Diag}(x_i I_r , I_{d-r}) \hspace{37px}
   \text{otherwise.} \end{cases} $$
 Furthermore, this shearing is compatible with system $A$.
\end{proposition}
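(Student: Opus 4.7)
The plan is to verify in turn that the shearing $F=S(x_i)G$ strictly drops $m(A_i)$ and that it is compatible with the whole system $[A]$.

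I would partition every matrix in the $(v,r-v,d-r-\varrho,\varrho)$ format induced by~\eqref{particularform3Pfaffian}, in which $S$ is block-diagonal with exponent vector $(1,1,0,1)$. A direct block computation shows that the only positions where $S^{-1}A_iS$ threatens to introduce $x_i^{-1}$ are $(1,3)$, $(2,3)$ and $(4,3)$; but these blocks of $A_{i,0}$ vanish (columns $r+1,\dots,d$ of $\tilde A_{i,0}$ are zero, since the form~\eqref{particularform3Pfaffian} of $G_{A_i}(\lambda)$ records $A_{i,1}$ rather than $A_{i,0}$ in those positions), so the $x_i^{-1}$ is absorbed and conjugation picks up the coefficient of $x_i$, namely $U_1$, $U_3$ and $M_3$. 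Because $m(A_i)>1$ forces $p_i\ge 1$, the correction $-x_i^{p_i}\operatorname{Diag}(\beta)$ is $O(x_i)$ and invisible at leading order. Hence
\[
\tilde A_{i,0}\;=\;\begin{pmatrix}A_{i,0}^{11} & O & U_1 & O\\A_{i,0}^{21} & O & U_3 & O\\ O & O & O & O\\ M_1 & O & M_3 & O\end{pmatrix}.
\]
After deleting the zero row block, its rank equals that of $\bigl(\begin{smallmatrix}A_{i,0}^{11} & U_1\\ A_{i,0}^{21} & U_3\\ M_1 & M_3\end{smallmatrix}\bigr)$, which by~\eqref{particularconditionbPfaffian} coincides with that of $\bigl(\begin{smallmatrix}A_{i,0}^{11} & U_1\\ A_{i,0}^{21} & U_3\end{smallmatrix}\bigr)$ and is strictly below $r$ by~\eqref{particularconditionaPfaffian}. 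This establishes $m(\tilde A_i)<m(A_i)$ and proves reducibility.

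Compatibility with the other components $k\ne i$ reduces to showing that the same three blocks of $A_k$ are divisible by $x_i$; the Poincar\'e ranks $p_k$ are automatically preserved because $S$ depends only on $x_i$ and so $\tilde A_k=S^{-1}A_kS$ involves no derivative correction. For this I would specialize the integrability identity~\eqref{eq:cond} at $x_i=0$ (permissible since $p_i\ge1$), obtaining $[A_{i,0},\,A_k|_{x_i=0}]=-x_k^{p_k+1}\,\partial A_{i,0}/\partial x_k$. Reading off its $(p,q)$-block with $q$ in column block $3$ or $4$ annihilates the right-hand side and drops one term on the left (those columns of $A_{i,0}$ are zero), leaving $\sum_m A_{i,0}^{(p,m)}(A_k|_{x_i=0})^{(m,q)}=0$. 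Stacking $p$ over all row blocks assembles the first $r$ columns of $\tilde A_{i,0}$, which have full column rank $r$ by the construction of $Q$; this pins $(A_k|_{x_i=0})^{(1,q)}$ and $(A_k|_{x_i=0})^{(2,q)}$ to zero for $q\in\{3,4\}$, handling the $(1,3)$ and $(2,3)$ denominator blocks simultaneously.

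The remaining block $(4,3)$ is the main obstacle: no single commutator identity at $x_i=0$ carries $(A_k|_{x_i=0})^{(4,3)}$ as its sole unknown, because $A_{i,0}$ has no nonzero entry whose product with this block would show up in the commutator. I would close the argument by chaining a second identity---the $(3,1)$-block of the commutator is the natural candidate---and substituting in the vanishings already derived, so that the equation collapses to a linear relation on $(A_k|_{x_i=0})^{(4,3)}$ through the rows $(V_1\ V_2)$. The row-module freeness of $G_{A_i}(\lambda=0)$, built into Proposition~\ref{gauss3Pfaffian} precisely to control the fourth row block, together with the column rank data already used, then forces $(A_k|_{x_i=0})^{(4,3)}=0$. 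Once divisibility is established in every denominator block, $\tilde A_k\in {\rm R}^{d\times d}$ for all $k$, the shearing is compatible with $[A]$, and the rank reduction of the first claim is realized inside the class of Pfaffian systems with normal crossings.
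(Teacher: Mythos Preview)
Your argument for the rank drop of $\tilde A_{i,0}$ and for the vanishing of the $(1,3)$ and $(2,3)$ blocks of $A_k|_{x_i=0}$ is correct and matches the paper's proof (modulo the slip ``first $r$ columns of $\tilde A_{i,0}$'', which should read $A_{i,0}$).

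The gap is in your handling of the $(4,3)$ block. The $(3,1)$-block of the commutator $[A_{i,0},A_k|_{x_i=0}]$ does \emph{not} contain $(A_k|_{x_i=0})^{(4,3)}$ at all: since $A_{i,0}^{(p,4)}=0$ for every $p$, the unknown $(A_k|_{x_i=0})^{(4,3)}$ can only enter the commutator through the term $(A_k|_{x_i=0})^{(4,3)}A_{i,0}^{(3,q)}$, i.e.\ only in row block $4$, columns $q=1,2$. So your ``natural candidate'' never sees the quantity you want to kill, and the appeal to row-module freeness of $G_{A_i}(0)$ is a red herring---that hypothesis was consumed in Proposition~\ref{gauss3Pfaffian} to build $Q$, and the present proposition does not use it.

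The paper closes the argument in two steps that you are missing. First, the $(1,2)$ and $(2,2)$ blocks of the integrability identity give
\[
\begin{pmatrix}A_{i,0}^{11}\\A_{i,0}^{21}\end{pmatrix}\,(A_k|_{x_i=0})^{(1,2)}=0,
\]
and full column rank $v$ of the left factor forces $(A_k|_{x_i=0})^{(1,2)}=0$. Second, the $(4,2)$ block of the identity reads
\[
(A_k|_{x_i=0})^{(4,3)}\,V_2 \;=\; M_1\,(A_k|_{x_i=0})^{(1,2)} \;=\; 0,
\]
and the full column rank $r-v$ of $V_2$ (which is part of the full-rank first $r$ columns of $A_{i,0}$) is what the paper invokes to conclude $(A_k|_{x_i=0})^{(4,3)}=0$. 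Replace your final paragraph by this chain through the $(\cdot,2)$ column and the argument goes through.
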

\begin{proof}
  Given system~\eqref{eq:sys}. For any $j \in \{1, \dots, n\}$ we partition
  $A_{j}$ according to~\eqref{particularform3Pfaffian}
$$A_{j} = \left(\begin{matrix} A_{j}^{11}  & A_{j}^{12} & A_{j}^{13}&
    A_{j}^{14}\\[5pt] A_{j}^{21} & A_{j}^{22} & A_{j}^{23} & A_{j}^{24}\\[5pt]
    A_{j}^{31} & A_{j}^{32} & A_{j}^{33} & A_{j}^{34}\\[5pt] A_{j}^{41} &
    A_{j}^{42} & A_{j}^{43} & A_{j}^{44}\end{matrix}\right), \quad 1 \leq j \leq
n, $$
where $A_{j}^{11} , A_{j}^{22} , A_{j}^{33} , A_{j}^{44}$ are square matrices of
dimensions $v, r-v, d-r-\varrho,$ and $\varrho$ respectively.  It is easy to
verify that the equivalent system $S[A] \equiv {\tilde{A}}$ given
by~\eqref{eq:equiv} admits the form
\begin{alignat*}2
  & \tilde{A}_{i} &= &\left(\begin{matrix}\phantom{x_i} A_{i}^{11}
      & \phantom{x_i} A_{i}^{12} & x_{i}^{-1} A_{i}^{13} & \phantom{x_i}
      A_{i}^{14} \\[5pt] \phantom{x_i} A_{i}^{21} & \phantom{x_i} A_{i}^{22}&
      x_{i}^{-1} A_{i}^{23} & \phantom{x_i}A_{i}^{24} \\[5pt] x_{i} A_{i}^{31} &
      x_{i} A_{i}^{32}& \phantom{x_i^{-1}} A_{i}^{33}& x_{i} A_{i}^{34} \\[5pt]
      \phantom{x_i} A_{i}^{41} & \phantom{x_i} A_{i}^{42} & x_{i}^{-1}
      A_{i}^{43} & \phantom{x_i} A_{i}^{44} \end{matrix}\right) - x_{i}^{p_{i}}
  \operatorname{Diag} (I_{r}, O_{d-r-\varrho}, I_\varrho) \\[10pt] 
  & \tilde{A}_{j} &=& \left(\begin{matrix} \phantom{x_j} A_{j}^{11} & \phantom{x_j} A_{j}^{12} &
      x_{j}^{-1} A_{j}^{13} & \phantom{x_j} A_{j}^{14} \\[5pt] \phantom{x_j}
      A_{j}^{21} & \phantom{x_j} A_{j}^{22} & x_{i}^{-1} A_{j}^{23} &
      \phantom{x_j} A_{j}^{24} \\[5pt] x_{j} A_{j}^{31}
      & x_{j} A_{j}^{32} & \phantom{x_j^{-1}}A_{j}^{33} & x_{j} A_{j}^{34} \\[5pt]
      \phantom{x_j} A_{j}^{41} & \phantom{x_j} A_{j}^{42} & x_{j}^{-1}A_{j}^{43}
      & \phantom{x_j} A_{j}^{44} \end{matrix}\right) , \quad 1 \leq j \neq i
  \leq n.
\end{alignat*} 
Hence, the leading matrix coefficient of the equivalent
$i^{th}$-component is given by
$$ \tilde{A}_{i,0} (\bar{x}_{i}) = \left(\begin{matrix} A_{i,0}^{11}
    & O & U_1 & O \\[5pt] A_{i,0}^{21} & O & U_3 & O \\[5pt] O& O &O& O \\[5pt]
    M_1 & O & M_3 & O \end{matrix}\right) $$
where $\operatorname{rank}(\tilde{A}_{i,0}) < r$ since~\eqref{particularconditionbPfaffian}
and~\eqref{particularconditionaPfaffian} are satisfied.
\goodbreak

\noindent It remains to prove the compatibility of $S(x_i)$ with the
system~\eqref{eq:sys}, in particular, that the normal crossings are
preserved. It suffices to prove that the entries of
${A}_{j} , 1 \leq j \neq i \leq n, $ which will be multiplied by $x_{i}^{-1}$
upon applying $S(x_i)$, namely, the entries of $A_{j}^{13}, A_{j}^{23},$ and
$A_{j}^{43}$ are zero matrices modulo $x_i$ otherwise poles in $x_{i}$ will be
introduced. This can be restated as requiring $A_{j}^{13}({x}_{i} = 0),$
$A_{j}^{23}({x}_{i} = 0),$ and $A_{j}^{43}({x}_{i} = 0)$ to be zero
submatrices. This requirement is always satisfied due to the integrability
condition and the resulting equality, obtained by setting $x_i=0$, which we
restate here
 \begin{equation}
 \label{eq:cond0}
 x_j^{p_j+1}\pder{A_{i,0}}{x_j}= A_{j}(x_i=0)\;A_{i,0} - A_{i,0} A_{j}(x_i=0) ,
\quad 1 \leq j \neq i \leq n .
\end{equation}
This equality induces a structure of $A_{j}(x_i=0)$ which depends on that of
$A_{i,0}$. Since $G_{A_{i}} (\lambda)$ is as in~\eqref{particularform3Pfaffian},
then, before applying the shearing transformation,
$A_{i,0} ({\bar{x}}_{i})$ has the following form~\eqref{form0} and
$A_{j} (x_i = 0)$ can be partitioned accordingly. So we have for
$1 \leq j \neq i \leq n$ 
\begin{eqnarray} \label{form0} A_{i,0}({\bar{x}}_{i}) &=&\left(\begin{matrix}
A_{i,0}^{11} & O & O & O \\[5pt] A_{i,0}^{21} & O_{(r-v)(r-v)} & O & O \\[5pt]
V_1 & V_2 & O_{(d-r-\varrho)(d-r-\varrho)} & O \\[5pt] M_1 & O& O & O_{\varrho
\varrho} \end{matrix}\right) ,\\[10pt]
 \label{form1} A_{j}(x_i=0) &=& \left(\begin{matrix} A_{j}^{11}(x_i=0) &
A_{j}^{12}(x_i=0)& A_{j}^{13}(x_i=0)& A_{(j)}^{14}(x_i=0) \\[5pt]
A_{j}^{21}(x_i=0) & A_{j}^{22}(x_i=0) & A_{j}^{23}(x_i=0) &
A_{j}^{24}(x_i=0)\\[5pt] A_{j}^{31}(x_i=0) & A_{j}^{32}(x_i=0) &
A_{j}^{33}(x_i=0)& A_{j}^{34}(x_i=0)\\[5pt] A_{j}^{41}(x_i=0) &
A_{j}^{42}(x_i=0) & A_{j}^{43}(x_i=0) &
A_{j}^{44}(x_i=0) \end{matrix}\right). \end{eqnarray} Inserting~\eqref{form0}
and~\eqref{form1} in~\eqref{eq:cond0}, one can obtain the desired results by
equating the entries of~\eqref{eq:cond0}. More explicitly, upon investigating
the entries in (Column 3), (Rows 1 and 2, Column 2), and (Row 4, Column 2), we
observe the following respectively:
\begin{itemize}
\item We have that 
\[\left(\begin{matrix} A_{i,0}^{11} & O \\[5pt] A_{i,0}^{21} & O \\[5pt]
V_1 & V_2 \\[5pt] M_1 & O \end{matrix}\right) \cdot \left(\begin{matrix}
A_{j}^{13}(x_i=0) \\[5pt] A_{j}^{23}(x_i=0) \end{matrix}\right) = O_{n,
n-r-\varrho}.\] The former matrix is of full rank $r$ by construction thus $
\smash{\left(\begin{matrix} A_{j}^{13}(x_i=0) \\[5pt]
A_{j}^{23}(x_i=0) \end{matrix}\right)}$ is a zero matrix.
\item We also get
\[\smash[t]{\left(\begin{matrix} A_{i,0}^{11} \\ A_{i,0}^{21} \end{matrix}\right)}
\cdot A_{j}^{12}(x_i=0) = O_{r, r-v}.\] The former is of full rank $v$ by
construction thus $A_{j}^{12}(x_i=0)$ is a zero matrix.
\item Finally, $A_{j}^{43}(x_i=0)\cdot V_2 - M_1 \cdot A_{j}^{12}(x_i=0)
=O_{\varrho, (r-v)}$. Since $A_{j}^{12}(x_i=0)$ is null and $V_2$ is of full
column rank $r-v$ by construction then $A_{j}^{43}(x_i=0)$ is a zero matrix as
well.
\end{itemize} This completes the proof.\end{proof}

We can thus establish the following theorem:

\begin{theorem}
\label{moserpfaff2}
Consider a Pfaffian system~$[A]$ and suppose that $m(A_i) > 1$ for some index $i \in \{1, \dots, n\}$. If $\theta_i (\lambda)$ given by~\eqref{eq:theta} vanishes for some index $i \in \{ 1, \dots, n\}$, then under the conditions required to attain~\eqref{gaussformPfaffian} and Proposition~\ref{gauss3Pfaffian}, we can
construct a compatible transformation $T \in GL_d({\rm R}_L)$ which reduces
$m(A_i)$ (and consequently $m(A)$). In this case, $T$ can be chosen to be a
product of transformations in $GL_d({\rm R}_{\bar{x}_i})$ and polynomial
transformations of the form
$\operatorname{Diag}(x_i^{\beta_1}, \dots, x_i^{\beta_d})$ where
$\beta_1, \dots, \beta_d$ are non-negative integers.
\end{theorem}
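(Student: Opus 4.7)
The plan is to construct $T$ as a composition of the three building blocks assembled in the previous subsections: the column reduction of Section~\ref{colpfaff}, the matrix $Q$ supplied by Proposition~\ref{gauss3Pfaffian}, and the shearing of Proposition~\ref{shearingPfaffian}. First I would apply the column reduction procedure to $A_{i,0}$. Under the freeness assumption on the column module of $A_{i,0}$, Corollary~\ref{cor:basis} yields a unimodular transformation $U_1 \in GL_d({\rm R}_{\bar{x}_i})$ that zeroes out the last $d-r$ columns; a second application, now to the $r\times r$ block $B^{11}$ under the freeness of its column module, produces $U_2$ and the refined block form~\eqref{gaussformPfaffian}. Setting $U := \operatorname{Diag}(U_2, I_{d-r}) \cdot U_1 \in GL_d({\rm R}_{\bar{x}_i})$, we note that $U$ is unimodular and has no $x_i$-dependence, hence is automatically compatible with $[A]$.

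Once $A_{i,0}$ is in form~\eqref{gaussformPfaffian}, the matrix $G_{A_i}(\lambda)$ of~\eqref{glambdaformPfaffian} is well defined and, by the determinantal identity $\det(G_{A_i}(\lambda)) = \theta_i(\lambda)$ established right after~\eqref{glambdaformPfaffian}, the hypothesis $\theta_i(\lambda) \equiv 0$ is equivalent to $\det(G_{A_i}(\lambda)) \equiv 0$. Proposition~\ref{gauss3Pfaffian} then applies: using the freeness of the row module of $G_{A_i}(\lambda=0)$, we obtain $Q(\bar{x}_i) \in GL_d({\rm R}_{\bar{x}_i})$ with $\det(Q) = \pm 1$, compatible with $[A]$, that reshapes $G_{A_i}(\lambda)$ into~\eqref{particularform3Pfaffian} while enforcing the rank conditions~\eqref{particularconditionbPfaffian} and~\eqref{particularconditionaPfaffian}. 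With these conditions in place, Proposition~\ref{shearingPfaffian} produces the shearing $S(x_i) = \operatorname{Diag}(x_i I_r, I_{d-r-\varrho}, x_i I_\varrho)$ (or $\operatorname{Diag}(x_i I_r, I_{d-r})$ when $\varrho = 0$), which is compatible with $[A]$ and strictly drops the rank of the leading coefficient of the $i$-th component below $r$.

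Putting everything together, set $T := U \cdot Q \cdot S$. Compatibility is preserved under composition, so $T$ is compatible with $[A]$; since $U, Q \in GL_d({\rm R}_{\bar{x}_i}) \subseteq GL_d({\rm R}_L)$ and $S \in GL_d({\rm R}_L)$, we have $T \in GL_d({\rm R}_L)$ with exactly the structural description claimed. To confirm the reduction of $m(A_i)$, note that compatibility forces $\tilde p_i \leq p_i$ while the argument above yields $\operatorname{rank}(\tilde A_{i,0}) < r$; together with $m(A_i) > 1$, this gives $m(\tilde A_i) = \max(0, \tilde p_i + \operatorname{rank}(\tilde A_{i,0})/d) < m(A_i)$, and consequently $\mu(A_i) < m(A_i)$ and $m(\tilde A) < m(A)$.

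The genuinely multivariate difficulty—and the step I expect to be the main obstacle—is ensuring that the shearing $S(x_i)$ does not destroy the normal crossings or inflate the Poincaré ranks $p_j$ for $j \neq i$. This is precisely where the complete integrability condition~\eqref{eq:cond0} is essential: setting $x_i = 0$ in~\eqref{eq:cond} constrains the block structure of $A_j(x_i=0)$ in terms of the block pattern of $A_{i,0}$, and the three rank arguments at the end of the proof of Proposition~\ref{shearingPfaffian} then show that the off-diagonal blocks of $A_j$ which would otherwise be divided by $x_i$ vanish modulo $x_i$. This constraint has no counterpart in the univariate theory, and it is the reason why the classical Moser algorithm can be lifted to the Pfaffian setting without sacrificing the normal-crossings structure of the other components.
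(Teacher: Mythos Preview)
Your proposal is correct and follows essentially the same approach as the paper: put $A_{i,0}$ into the form~\eqref{gaussformPfaffian} via the column reduction $U$, invoke the identity $\det(G_{A_i}(\lambda))=\theta_i(\lambda)$, apply Proposition~\ref{gauss3Pfaffian} to obtain $Q$, and then Proposition~\ref{shearingPfaffian} to obtain the compatible shearing $S$, so that $T=UQS$ reduces $m(A_i)$. Your write-up is simply more explicit than the paper's (spelling out the verification that $m(\tilde A_i)<m(A_i)$ and recapitulating why the integrability condition makes $S$ compatible), but there is no substantive difference in strategy.
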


\begin{proof}
Under the required conditions, we can assume that $A_{i,0} $ has the
  form~\eqref{gaussformPfaffian}. Let $G_{A_{i}} (\lambda)$ be given by~\eqref{glambdaformPfaffian}. Then
  $\det(G_{A_{i}} (\lambda))$ vanishes identically in $\lambda$ if and only
  if $\theta_i(\lambda)$ does. Then the system $S[Q[A]]$ where
  $S, Q$ are as in Propositions~\ref{gauss3Pfaffian} and~\ref{shearingPfaffian}
  respectively, has the desired property.
\end{proof}
For a given index $i\in\{1,\dots,n\}$, the algebraic rank of the leading matrix coefficient can be decreased as long as $\theta_{i} (\lambda) $ vanishes
identically in $\lambda$. In case the leading matrix coefficient eventually
reduces to a zero matrix, the Poincar\'e rank drops at least by one. This
process can be repeated until the Moser rank of system $[A]$ equals to its Moser
invariant. Due to the compatibility of $T$ in Theorem~\ref{moserpfaff}, rank
reduction can be applied to any of the components of $[A]$ without altering the
Moser rank of the others. Hence, by Corollary~\ref{prank}, the true Poincar\'e
rank of system~$[A]$ can be attained by a successive application of the rank
reduction to each of its components.

Finally, we remark that the conditions of Theorem~\ref{moserpfaff2} are always satisfied in the bivariate case $n=2$ of arbitrary dimension.

\subsection{Examples}
\begin{example}
Consider the completely integrable Pfaffian system with normal crossings given by 
  \begin{equation*}
    \begin{cases}
      x_1^{2} \pder{F}{x_1} = A_1 F = \left(\begin{matrix}
          (x_1 x_2 x_3 +1)(x_1-1)& x_3 (x_1 -1) \\
           x_1 x_2 (1 -2x_1 + x_1 x_2 x_3 - x_1^2x_2x_3)& x_1 x_2 x_3 (1-x_1)
        \end{matrix}\right) F, \\
      x_2^3 \pder{F}{x_2} = A_2 F= \left(\begin{matrix}
          (2+3 x_2) (x_1 x_2 x_3 +1) &  x_3 (2+3x_2)\\
          -x_1x_2(3x_1x_2^2x_3 + 2x_1 x_2 x_3 + x_2^2 + 3 x_2 +2) & -x_1 x_2 x_3 (2+3x_2)
        \end{matrix}\right) F,\\
           x_3 \pder{F}{x_3} = A_3 F= \left(\begin{matrix}
          1 & 0 \\
          -x_1 x_2 & 0
        \end{matrix}\right) F .
    \end{cases}
  \end{equation*}
A fundamental matrix of formal solutions is given by:
\begin{equation} \label{exm3:sol} \Phi(x_1, x_2, x_3) x_1^{C_1}x_2^{C_2} x_3^{C_3}
    e^{q_1(x_1^{- 1/s_1})} e^{q_2(x_2^{- 1/s_2})} e^{q_3(x_3^{- 1/s_3})}. \end{equation} 
    
If we only seek to compute the exponential parts $q_1, q_2, q_3$ in~\eqref{exm3:sol}, then from the associated ODS, we compute:
$$ \begin{cases} s_1 =1 \; \text{and} \; q_1(x_1) = \frac{1}{x_1},\\ s_2 =1 \; \text{and} \; q_2(x_2) = \frac{-1 - 3x_2}{x_2^2} , \\ 
s_3 =1 \; \text{and} \; q_3(x_3) =0. \end{cases}$$

Furthermore, if we wish to compute a fundamental matrix of formal solutions,
then following the steps of our formal reduction algorithm, we look at the
leading coefficients of each of the three components of the given system. If one
of these coefficients has two distinct eigenvalues, the we can apply the
splitting lemma (Theorem~\ref{blockpfaff}). Indeed, since $A_{1,0}(x_2, x_3)$ has this property, we can
compute such a transformation $ F = T G$ up to any order. In particular, up to
order $10$, we compute:
$$ T = \begin{pmatrix} 1 & x_1^3 x_2^3 x_3^4 - x_1^2 x_2^2 x_3^3 + x_1 x_2 x_3^2 - x_3 \\ -x_1 x_2 & 1 \end{pmatrix} ,$$
which yields the following diagonalized system (up to order $10$):

\begin{equation*}
    \begin{cases}
      x_1^{2} \pder{G}{x_1} = A_1 G = \left(\begin{matrix}
          x_1 -1 & 0 \\
          0 & x_1^2 x_2 x_3 f(x)
        \end{matrix}\right) G, \\
      x_2^3 \pder{G}{x_2} = A_2 G= \left(\begin{matrix}
          (2+3 x_2)  &  0\\
          0 & x_1 x_2^3 x_3 f(x)
        \end{matrix}\right) G,\\
           x_3 \pder{G}{x_3} = A_3 G= \left(\begin{matrix}
          1 & 0 \\
        0 & x_1 x_2 x_3 f(x)
        \end{matrix}\right) G,
    \end{cases}
  \end{equation*}
with $f(x):=(x_1^2 x_2^2 x_3^2 - x_1 x_2 x_3 +1)$.
Hence, the system can be uncoupled into two subsystems of linear scalar
equations and integrated to construct $G$, and consequently,~$F$.
\end{example}

An example of the reduction process for a system in two variables was given in
Example~\ref{ex:sim}. However, examples of dimensions two and three do not cover
the richness of the techniques presented. So, to illustrate the full process, we
treat an example of dimension six. Due to the size of the system and the number
of necessary computation steps, we are not able to include it directly in this
paper. It is available in several formats at \vspace{0.1cm}
\begin{center}
\url{http://www.mjaroschek.com/pfaffian/}
\end{center}
\section{Formal Reduction Algorithm}
\label{mainpfaff}
\subsection{The Algorithm In Pseudo Code}
We now give the full algorithm in pseudo-code and we refer to more
detailed descriptions within the article whenever necessary.

\begin{remark}
  Throughout the article, we adopted the field of complex numbers $\set C$ as
  the base field for the simplicity of the presentation. However, any computable
  commutative field $K$ with $\mathbb{Q} \subseteq K \subseteq \mathbb{C}$ can
  be considered instead. In this case, the restrictions on the extensions of the
  base field discussed in~\cite{key24} apply
  as well and are taken into consideration within our \textsc{Maple}
  implementation.
\end{remark}

Given system $[A]$, we discuss the eigenvalues of the leading matrix coefficients 
$A_{i,0}$, $i \in \{ 1, \dots, n\}$, of its $n$ components. If for all of these
components uncoupling is unattainable, then we fix $i \in \{ 1, \dots, n\}$ and
proceed to compute the exponential order $\omega(A_i)$ from the associated ODS.
Suppose that $\omega(A_i)= \frac{{\ell} }{m}$ with ${\ell} ,m$ coprime positive
integers. One can then set $t = {x_i}^{1/m}$ (re-adjustment of the independent
variable), and perform again rank reduction to get an equivalent system whose
$i^{th}$ component has Poincar\'e rank equal to ${\ell}$ and leading matrix
coefficient with at least $m$ distinct eigenvalues. Consequently,
block-diagonalization can be re-applied to uncouple the $i^{th}$-component.  By
Section~\ref{diagpfaff}, this uncoupling results in an uncoupling for the
whole system. As mentioned before, this procedure can be repeated until we
attain either a scalar system, i.e.\ a system whose $n$ components are scalar
equations, or a system whose Poincar\'e rank is given by $(0, \dots,0)$. The
former is trivial and effective algorithms are given for the latter
in~\cite[Chapter 3]{key73}. \vspace{1cm}

\bgroup
\def\arraystretch{1.0}
\begin{center}
  \begin{tabular}{p{11cm}l}
    \hline\\[-2.5ex]
  \textbf{Algorithm 1: fmfs\_pfaff(}$p,A$\textbf{)}\\
    \hline\\[-2.5ex]
    \begin{tabular}{lp{9cm}}
      \textbf{Input:} & $p=(p_1,\dots,p_n), A(x)=(A_1,\dots,A_n)$ of~\eqref{eq:sys}.\\
      \textbf{Output:} & A fundamental matrix of formal solutions~\eqref{eq:sol}.\\
    \end{tabular} & \\
    \hline\\[-2.5ex]
    \begin{tabular}{ll}
\quad ${\{C_i\}}_{1 \leq i \leq n} \gets  O_n$; ${\{Q_i\}}_{1 \leq i \leq n} \gets  O_d$;  $\Phi \gets I_d$\\
\quad \textbf{WHILE} $d \neq 1$ or $p_{i} >0$ for some $i \in \{ 1, \dots , n\}$ \textbf{DO} \\
 \quad \quad  \textbf{IF} $A_{i,0}$ has at least two distinct eigenvalues \\
\quad \quad \quad Split system $[A]$ as in Section~\ref{diagpfaff}; Update $\Phi$\\
 \quad \quad \quad  \textsc{Fmfs\_pfaff}  $(p, \tilde{A}^{11})$; Update $\Phi$, ${\{C_i\}}_{1 \leq i \leq n}$, ${\{Q_i\}}_{1 \leq i \leq n}$ \\
 \quad \quad \quad  \textsc{Fmfs\_pfaff}  $(p, \tilde{A}^{22})$; Update $\Phi$, ${\{C_i\}}_{1 \leq i \leq n}$, ${\{Q_i\}}_{1 \leq i \leq n}$\\
\quad \quad \textbf{ELSE IF} $A_{i,0}$ has one non-zero eigenvalue \\
 \quad \quad \quad  Update $Q_i$ from the eigenvalues of ${A}_{i,0}$\\
 \quad \quad \quad  $A(x) \gets $ Follow Section~\ref{shiftpfaff}  ($A_{i,0}$ is now nilpotent) \\
 \quad \quad \quad   \textsc{fmfs\_pfaff} $(p, \tilde{A}(x))$; Update $\Phi$, ${\{C_i\}}_{1 \leq i \leq n}$, ${\{Q_i\}}_{1 \leq i \leq n}$\\
\quad \quad \textbf{ELSE} \\
 \quad \quad \quad  Apply rank reduction of Section~\ref{sec:rankred}; Update
      $\Phi$; Update $p$; Update $A_{i,0}$ \\
 \quad \quad \quad  \textbf{IF} $p_{i} >0$ and $A_{i,0}$ has at least two distinct eigenvalues \\
 \quad \quad \quad  \quad   Split system as in Section~\ref{diagpfaff}; Update $\Phi$ \\
 \quad \quad \quad  \quad  \textsc{fmfs\_pfaff}  $(p, \tilde{A}^{11}(x))$; Update $\Phi$, ${\{C_i\}}_{1 \leq i \leq n}$, ${\{Q_i\}}_{1 \leq i \leq n}$ \\
 \quad \quad \quad  \quad  \textsc{fmfs\_pfaff}  $(p, \tilde{A}^{22}(x))$; Update $\Phi$, ${\{C_i\}}_{1 \leq i \leq n}$, ${\{Q_i\}}_{1 \leq i \leq n}$\\
\quad \quad \quad  \textbf{ELSE IF} $A_{i,0}$ has one non-zero eigenvalue \\
 \quad \quad \quad  \quad  Update $Q_i$ from the eigenvalues of $A_{i,0}$ \\
 \quad \quad \quad  \quad $A(x) \gets $ Follow Section~\ref{shiftpfaff}; ($A_{i,0}$ is now nilpotent) \\ \quad \quad \quad  \quad  \textsc{Fmfs\_pfaff} $(p, \tilde{A}(x))$; Update $\Phi$, ${\{C_i\}}_{1 \leq i \leq n}$, ${\{Q_i\}}_{1 \leq i \leq n}$ \\
 \quad \quad \quad  \textbf{ELSE} \\
 \quad \quad \quad  \quad  Follow Section~\ref{sec:invariants} to compute $\omega(A_i) = \frac{{\ell} }{m}$ \\
 \quad \quad \quad  \quad  $x_{i} \gets {x_{i}}^m$ \\
 \quad \quad \quad  \quad  Apply rank reduction of Section~\ref{sec:rankred}\\
 \quad \quad \quad  \quad   Update $\Phi$; Update $p$ ($p_{i} \gets {\ell} $); Update $A_{i,0}$ \\
 \quad \quad \quad  \quad  Update $Q_i$ from the eigenvalues of ${A}_{i,0}$\\
 \quad \quad \quad  \quad  $A(x) \gets $ Follow Section~\ref{shiftpfaff}; ($A_{i,0}$ is now nilpotent) \\
 \quad \quad \quad  \quad  \textsc{fmfs\_pfaff} $(p, A(x))$; Update $\Phi$, ${\{C_i\}}_{1 \leq i \leq n}$, ${\{Q_i\}}_{1 \leq i \leq n}$ \\
\quad \quad \quad  \textbf{END IF}\\
 \quad \quad \textbf{ END IF} \\
 \quad \textbf{ END WHILE} \\
\quad \textbf{ RETURN} {$p$, $A$, $\Phi$, ${\{C_i\}}_{1 \leq i \leq n}$, ${\{Q_i\}}_{1 \leq i \leq n}$}.
    \end{tabular} & \\
    \hline
  \end{tabular}
\end{center}
\egroup

\bgroup
\def\arraystretch{1.0}
\begin{center}
\ \\\vspace{3cm}
  \begin{tabular}{p{11cm}l}
    \hline\\[-2.5ex]
    \textbf{Algorithm 2: rankReduce(}$p,A$\textbf{)}\\
    \hline\\[-2.5ex]
    \begin{tabular}{lp{9cm}}
      \textbf{Input:} & $p_1\dots,p_n, A_1.\dots,A_n$ of~\eqref{eq:sys}.\\
      \textbf{Output:} & $T(x) \in GL_d({\rm R}_L)$ and an irreducible equivalent system
                         \{$T(A) $\} whose Poincar\'e rank is its true
                         Poincar\'e rank and the rank of its leading coefficient
                         matrices is minimal ($\mu(T(A)) = m(T(A))$)\\
    \end{tabular} & \\
    \hline\\[-2.5ex]
    \begin{tabular}{ll}
\quad  $T \leftarrow I_{d}$\\
\quad  \textbf{FOR} every $i$ from $1$ to $n$ \textbf{DO}\\
\quad \quad  $T_i \leftarrow I_{d}$; $p_i  \leftarrow$ Poincar\'e rank of $A_{i}$ \\
 \quad \quad  $U(\bar{x_i})  \leftarrow $ yields the
      form~\eqref{gaussformPfaffian} \\
\quad \quad \textbf{IF} $U(\bar{x_i})$ cannot be determined \\
 \quad \quad\quad \textbf{RETURN
     ERROR} ``Column module not free.'' \textbf{END IF}\\
 \quad \quad   $A_{i} \gets  U^{-1} A_{i} U $; $T_i  \leftarrow  T_i U$ \\
 \quad \quad  \textbf{WHILE} $\det (G_{A_{i}} (\lambda))=0$ and $p_i >0$ \textbf{DO} \\
 \quad \quad \quad $Q(\bar{x_i}), \varrho  \leftarrow $
      Proposition~\ref{gauss3Pfaffian} \\
 \quad \quad \quad \textbf{IF} $Q(\bar{x_i})$ cannot be determined \\
 \quad \quad \quad\quad \textbf{RETURN
     ERROR} ``Row module not free.'' \textbf{END IF}\\
 \quad \quad \quad $S(x_i) \leftarrow $ Proposition~\ref{shearingPfaffian} \\
 \quad \quad \quad $P  \leftarrow  Q S$; $T_i  \leftarrow  T_i P$ \\
 \quad \quad \quad $A_{i} \leftarrow  P^{-1} A_{i} P - x_i^{p_i} S^{-1} \frac{\partial S}{\partial x_i}$ \\
 \quad \quad \quad $p_i  \leftarrow $ Poincar\'e rank of $A_{i}$ \\
 \quad \quad \quad  $U(\bar{x_i})  \leftarrow $ yields the form~\eqref{gaussformPfaffian}  \\
 \quad \quad \quad $A_{i} \leftarrow  U^{-1} A_{i} U $; $T_i  \leftarrow  T_i U$ \\
 \quad \quad \textbf{END WHILE} \\
\quad \quad \textbf{FOR} every $j \neq i$ from $1$ to $n$ \textbf{DO}\\
\quad \quad \quad $A_{j} \leftarrow T_i^{-1} A_{j} T_i - x_j^{p_j+1} T_i^{-1} \pder{T_i}{x_j}$ \\
\quad \quad \textbf{END FOR}\\
\quad \quad $T \leftarrow T T_i$ \\
\quad \textbf{END FOR}\\
\quad \textbf{RETURN} ($T, p_1, \dots, p_n, A_1,\dots,A_n$).
    \end{tabular} & \\
    \hline
  \end{tabular}
\end{center}
\egroup

\subsection{An Alternative Rank Reduction Algorithm}
\label{sec:alt}
In the case of univariate systems, Levelt's investigations of the existence of
stationary sequences of free lattices lead to an algorithm which reduces the
Poincar\'e rank to its minimal integer value~\cite{key57}. This algorithm was
then generalized to the bivariate case by the first author of this paper et
al. in~\cite{key5}. The theoretical basis of this algorithm differs
substantially from the algorithm given herein based on Moser's criterion. The
final result of both approaches however, i.e.\ the algorithm itself, is based on
applying column reductions and shearing transformations in both algorithms,
though in a different manner. In fact, the algorithms coincide for the
particular case of $\varrho =0$. The limitation in~\cite{key5} within the
generalization to the multivariate case is in guaranteeing the freeness
conditions for the leading matrix coefficient $A_{i,0}$ as stated in
Section~\ref{proofpfaff}. The additional condition of the freeness of the row
module as in Proposition~\ref{gauss3Pfaffian} is not required. Since the linear algebra
problem is resolved in Section~\ref{colpfaff}, this results in Algorithm~$3$.

Although both algorithms have an identical cost~\cite[pp 108]{key73},
experimental results for the univariate case and certain bivariate systems
(singularly-perturbed linear differential systems) suggest that the
lattice-based algorithm complicates dramatically the coefficients of the system
under reduction, even if Moser's criterion is adjoined to avoid some unnecessary
computations~\cite[Section 4.3]{key102}. Hence, Algorithm~$2$ can be used as
long as the required freeness conditions hold. Nevertheless, if the freeness of
the row module of $G_{A_i}(\lambda=0)$ is not satisfied, then Algorithm~$3$ can
be used as long as the column modules of $A_{i,0}$ and $B^{11}$ is free.  There remains
however, the question on the equivalence of these conditions.

\bigskip
\bgroup
\def\arraystretch{1.0}
\begin{center}
  \begin{tabular}{p{11cm}l}
    \hline\\[-2.5ex]
    \textbf{Algorithm 3: rankReduce\_alt(}$p,A$\textbf{)}\\
    \hline\\[-2.5ex]
    \begin{tabular}{lp{9cm}}
    \textbf{Input:} & $p_1\dots,p_n, A_1.\dots,A_n$ of~\eqref{eq:sys}.\\
    \textbf{Output:} & $T(x) \in GL_d({\rm R}_L)$ and an irreducible equivalent system
                       \{$T(A) $\} whose Poincar\'e rank is its true
                       Poincar\'e rank and the rank of its leading coefficient
                       matrices is minimal ($\mu(T(A)) = m(T(A))$)\\
    \end{tabular} & \\
    \hline\\[-2.5ex]
    \begin{tabular}{ll}
\quad  $T \leftarrow I_{d}$\\
\quad  \textbf{FOR} every $i$ from $1$ to $n$ \textbf{DO}\\
\quad \quad  $T_i \leftarrow I_{d}$; $p_i  \leftarrow$ Poincar\'e rank of $A_{i}$ \\
\quad \quad \textbf{WHILE} $j < d-1$ and $p_i>0$ \textbf{DO} \\
\quad \quad \quad  $U(\bar{x_i})  \leftarrow $ yields the form
      \eqref{gaussformPfaffian} \\
\quad \quad \quad \textbf{IF} $U(\bar{x_i})$ cannot be determined \\
 \quad\quad \quad\quad \textbf{RETURN
     ERROR} ``Column module not free.'' \textbf{END IF}\\
 \quad \quad  \quad   $r = \operatorname{rank}(A_{i,0})$ \\
 \quad \quad  \quad   $S(x_i) \leftarrow $ Proposition~\ref{shearingPfaffian} with $\varrho =0$ (i.e.\ $S(x_i) \gets \operatorname{Diag}(x_i I_r, I_{d-r})$)\\
 \quad \quad  \quad   $P  \leftarrow  U S$\\
 \quad \quad \quad  $A_{i} \leftarrow  P^{-1} A_{i} P - x_i^{p_i} S^{-1} \frac{\partial S}{\partial x_i}$ \\
 \quad \quad \quad  $\tilde{p}_i  \leftarrow $ Poincar\'e rank of $A_{i}$ \\
 \quad \quad \quad   \textbf{IF}  $\tilde{p}_i  < p_i$ \textbf{THEN} \\
 \quad \quad \quad   \quad $j \leftarrow 0$ \\
 \quad \quad \quad   \textbf{ELSE}\\
 \quad \quad \quad   \quad $j \leftarrow j+1$ \\
 \quad \quad \quad  \textbf{END IF}\\
 \quad \quad \quad  $p_i \leftarrow \tilde{p}_i$ \\
\quad \quad \quad $T_i  \leftarrow  T_i P$ \\
\quad \quad \textbf{END WHILE} \\
\quad \quad \textbf{FOR} every $j \neq i$ from $1$ to $n$ \textbf{DO}\\
\quad \quad \quad $A_{j} \leftarrow T_i^{-1} A_{j} T_i - x_j^{p_j+1} T_i^{-1} \frac{\partial T_i}{\partial x_j}$ \\
\quad \quad \textbf{END FOR}\\
\quad \quad $T  \leftarrow   TT_i$ \\
\quad \textbf{END FOR}\\
\quad \textbf{RETURN} ($T, {A_{i}}_{\{1 \leq i \leq n\}}$).
    \end{tabular} & \\
    \hline
  \end{tabular}
\end{center}
\egroup

\goodbreak
\section{Conclusion}
\label{conpfaff}
In this article, we studied completely integrable Pfaffian systems with normal
crossings in several variables. We showed that one can associate a set of
univariate linear singular differential systems from which the formal invariants
of the former can be retrieved. This reduces computations to computations over a
univariate field via \textsc{Isolde} or \textsc{Lindalg}, and limits the numbers
of coefficients necessary for the computations. We then complemented our work
with a rank reduction algorithm based on generalizing Moser's criterion and the
algorithm given by Barkatou in~\cite{key41}. The former is applicable to any
bivariate system. However, for multivariate systems, it demands that several explicitly
described conditions are met.

One field of investigation is the possibility of weakening the conditions required in the multivariate setting for the rank reduction. Another one is the adaptation of the techniques developed herein for rank reduction to generalize the notion of \textit{simple} systems (see~\cite{key40} for
$m=1$). This notion, in the univariate case, gives another approach to construct a basis of the space of regular
solutions~\cite{key25}, and the obstacles encountered are similar to those in rank reduction. An additional field is to study closed form solutions~\cite{key3811}
in the light of associated ODS introduced in Section~\ref{sec:invariants}. 

In future work we aim
to thoroughly study the theoretical complexity of the proposed algorithm as well
as give detailed information of the effects on the truncation of the input
system during the computation. This work is not straightforward and even in the
case of regular systems, it is
not studied in the existing work (i.e. ~\cite[Chapter 3]{key73} and references therein). 

Systems arising from applications do not necessarily or directly fall into the
class of completely integrable Pfaffian systems with normal
crossings. Investigations in more general classes can be found
in~\cite{key32,key3062,key4092} and references therein. 

An additional field of investigation is the formal reduction in the difference
case using the approaches proposed herein. Praagman established
in~\cite{key3239} a formal decomposition of $m$ commuting partial linear
difference operators. This study was intended as an analog to that established
by Levelt, van den Essen, G\'erard, Charri\`ere, Deligne, and
others~\cite{key3,key53,key20,key21}. 

\section*{Acknowledgments}
\label{Ack}
We are grateful to the anonymous referees whose comments and suggestions
improved the readability of this paper. The second author would also like to
thank the Technische Universit\"at Wien, where he is employed by the time of
submission and supported by the ERC Starting Grant 2014 SYMCAR 639270. He
furthermore would like to thank Matteo Gallet for valuable discussions. The
third author would like to thank the University of Limoges since a part of this
work was done during the period of her doctoral studies at DMI.

\bibliographystyle{plain}

\end{document}